\pdfoutput=1
\documentclass[journal]{IEEEtran}

\usepackage{amsmath,amsfonts}
\usepackage{amssymb}
\usepackage[ruled,vlined,linesnumbered]{algorithm2e}
\usepackage{array}
\usepackage{subcaption}
\usepackage{textcomp}
\usepackage{stfloats}
\usepackage{url}
\usepackage{verbatim}
\usepackage{graphicx}
\usepackage{cite}
\usepackage{multirow}
\usepackage{lipsum}
\usepackage{mathtools}
\usepackage{cuted}
\usepackage{breqn}
\usepackage[english]{babel}
\usepackage[autostyle]{csquotes}
\usepackage[acronym,toc,shortcuts]{glossaries}
\usepackage{comment}
\usepackage{siunitx}
\usepackage{textgreek}
\usepackage{xcolor}
\usepackage{float}

\newtheorem{lemma}{Lemma}
\newtheorem{theorem}{Theorem} 

\begin{document}

\title{Hierarchical Multi-Task Federated Learning in VANETs}

\author{M. Saeid HaghighiFard,~\IEEEmembership{Graduate Student Member,~IEEE} and Sinem Coleri,~\IEEEmembership{Fellow,~IEEE}
		
\thanks{M. Saeid Haghighifard and Sinem Coleri are with the Department of Electrical and Electronics Engineering, Koc University, Istanbul, Turkey, email: \{mhaghighifard21, scoleri\}@ku.edu.tr. This work is supported by the Scientific and Technological Research Council of Turkey, Grant number 119C058, and Ford Otosan.}
}

\markboth{Journal of \LaTeX\ Class Files,~Vol.~14, No.~8, August~2021}%
{Shell \MakeLowercase{\textit{et al.}}: A Sample Article Using IEEEtran.cls for IEEE Journals}

 \maketitle
\IEEEtitleabstractindextext{%
\begin{abstract}
Vehicular Ad hoc Networks (VANETs) increasingly rely on federated learning (FL) to enable collaborative intelligence without sharing raw sensory data. However, most existing vehicular FL frameworks assume that all vehicles train a single global model for a common task, which limits their applicability in practical vehicular environments where vehicles may perform heterogeneous learning tasks under non-independent and identically distributed (non-IID) data, intermittent connectivity, and high mobility. To address these challenges, this paper proposes an AutoEncoder-based Reliability-Optimized Hierarchical Multi-Task Federated Learning (AERO-HMTFL) framework for dynamic multi-hop clustered VANETs. The proposed framework introduces a tri-weighted clustering metric that jointly considers vehicular mobility, shared-model similarity, and task affinity to produce mobility-stable, semantically aligned clusters. Each vehicle employs a split-model architecture comprising a shared autoencoder-based representation module and multiple task-specific heads, with only the shared autoencoder parameters exchanged while the task heads remain local. To improve robustness, cluster heads perform reliability-aware aggregation based on historical validation performance and participation frequency, while the Evolved Packet Core (EPC) conducts global shared-autoencoder fusion across clusters. Extensive simulations demonstrate that, compared with the multi-task federated learning benchmarks, AERO-HMTFL achieves up to 13\% higher sustained EPC-level accuracy, exhibits more stable learning dynamics, and reduces EPC-level packet transmissions by approximately 87-97\%. Under short-range connectivity, it also requires approximately 13-29\% fewer communication rounds to converge.
\end{abstract}
}

\maketitle
\IEEEdisplaynontitleabstractindextext

\begin{IEEEkeywords}
Hierarchical federated learning, multi-task federated learning, clustering, vehicular networks.
\end{IEEEkeywords}

\section{Introduction}

\IEEEPARstart{V}{ehicular} Ad hoc Networks (VANETs) increasingly rely on machine learning (ML) to support safety- and mobility-critical services, including object detection, trajectory prediction, traffic-flow estimation, and driving-scene understanding. These services exploit the rich sensory information generated by onboard cameras, LiDAR, radar, and other vehicular sensing systems~\cite{zhang2024fl_its}. Conventional centralized learning requires vehicles to transmit their locally generated data to a central server, which raises substantial concerns regarding raw-data exposure, communication overhead, and scalability. Federated learning (FL) has emerged as a promising alternative in which vehicles train models locally and exchange model parameters or updates instead of raw data. By retaining raw observations at the vehicles, FL limits their direct exposure and supports distributed learning across mobile edge and vehicular environments~\cite{lim2020fl_survey,nguyen2021fl_iot,elbir2022fl_vehicular, elbir2022channel}. However, most conventional FL methods assume that all participating clients optimize a single shared model for a common learning objective. This assumption is restrictive in practical VANETs, where vehicles may perform heterogeneous but related sensing and learning tasks. These tasks may differ in their input modalities, label spaces, data distributions, and optimization objectives~\cite{zhang2024fl_its,wang2023flcav,chellapandi2024cav_survey}. Forcing heterogeneous objectives into a single global model can therefore be ineffective because incompatible output spaces, conflicting gradients, and task-dependent feature requirements may interfere during aggregation, degrading task-specific performance.

\subsection{Related Work}

Multi-task federated learning (MTFL) replaces the single-global-model assumption with coordinated learning among heterogeneous but related tasks~\cite{smith2017fmtl,marfoq2021fedem,li2021ditto,yi2024rhfedmtl}. Existing MTFL methods generally assign a separate model to each client or learning task and exploit relationships among these models during training. The seminal formulation in~\cite{smith2017fmtl} treats each client as a distinct but related task and jointly optimizes the client models while accounting for communication cost, stragglers, and fault tolerance. The mixture-based method in~\cite{marfoq2021fedem} learns several shared models and constructs a personalized model for each client by weighting them according to how well they represent the client’s local data. Ditto~\cite{li2021ditto} learns a global reference model together with a personalized model for each client. Ditto is applied independently to each learning task and therefore represents personalized federated learning without cross-task collaboration. RHFedMTL~\cite{yi2024rhfedmtl} considers multiple tasks in a terminal-based station-cloud hierarchy and controls the number of local and intermediate iterations according to the available resource budget. Although these methods support personalized or task-specific learning, they maintain a separate complete model for each task. Consequently, they either coordinate complete task models or train them independently, rather than jointly learning an explicit shared component through which vehicles performing different tasks can exchange transferable knowledge.

Recent MTFL methods address this limitation by dividing the learning model into common and task-dependent components and sharing only the components that are compatible across tasks. FedHCA$^2$~\cite{lu2024fedhca2} aggregates encoder parameters across heterogeneous clients and separately coordinates their task-dependent decoders to address model dissimilarity. FedBone~\cite{chen2024fedbone} places a general model at the server and retains task-specific models at the clients, while using gradient projection to reduce conflicts among heterogeneous tasks. M-Fed~\cite{cao2025mfed} shares encoder-based knowledge among clients performing different tasks and preserves task-dependent components locally. However, these methods assume direct client-server or fixed terminal-edge-cloud coordination. They do not support shared representation learning over a dynamically vehicular environment, where vehicle associations change with mobility and task availability, and model updates may arrive intermittently.

Hierarchical federated learning (HFL) alleviates the high communication overhead and central-server bottleneck of flat FL by introducing intermediate aggregation tiers. HiFlash~\cite{wu2023hiflash} lets edge servers synchronously aggregate updates from nearby clients and asynchronously forward the edge models to the cloud, while controlling stale updates and client-edge associations. ShapeFL~\cite{deng2024shapefl} selects the nodes that act as edge aggregators and assigns clients to them to reduce communication cost while maintaining diverse data at each aggregator. CPFedAvg~\cite{liu2025cpfedavg} removes the single cloud aggregation server and allows multiple upper-tier servers to exchange and mix their model parameters. Heal~\cite{xu2025heal} assigns clients to edge servers and transmits only selected model layers for aggregation. DaWa~\cite{sai2025dawa} uses reinforcement learning to select participating clients and determine their aggregation weights according to model accuracy and training cost. Despite employing different mechanisms to improve efficiency, conventional HFL methods generally rely on relatively stable client-edge associations and hierarchically train a single shared model for a common task. They are therefore not designed for dynamic vehicular networks, where mobility continuously changes connectivity and vehicle participation. The vehicular HFL frameworks in~\cite{11000096,haghighifard2026securehfl} enable mobility-aware multi-hop coordination by dynamically organizing vehicles into clusters, aggregating local updates at cluster heads, and forwarding cluster-level models to the edge cloud. However, they assume that all vehicles optimize the same model for a common learning objective and therefore do not support heterogeneous tasks, cross-task knowledge sharing, or task-aware vehicle association.

\subsection{Main Contributions}

To address these limitations, this paper develops an AutoEncoder-based Reliability-Optimized Hierarchical Multi-Task Federated Learning (AERO-HMTFL) framework for dynamic multi-hop clustered VANETs. The framework enables vehicles with heterogeneous learning objectives to exchange and hierarchically aggregate a shared autoencoder that captures transferable representations, while retaining raw data and task-specific components locally. By integrating task-aware clustering and reliability-weighted aggregation, it supports scalable and robust collaboration under mobility, intermittent connectivity, and time-varying participation.

The main contributions of this paper are summarized as follows.

\begin{itemize}
\item We propose a vehicular HFL that shifts from conventional single-task hierarchical learning to a multi-task architecture that supports heterogeneous learning objectives for vehicles in a privacy-preserving and scalable manner, for the first time in the literature.

\item We introduce, for the first time, task-related information into vehicle clustering for hierarchical multi-task learning in multi-hop VANETs. Specifically, the proposed tri-weighted metric jointly captures mobility and task-set dissimilarities, allowing vehicles to form clusters that are both mobility-stable and learning-compatible. 

\item We incorporate, for the first time in the literature, reliability-aware hierarchical aggregation into a task-aware MTFL framework for dynamic multi-hop VANETs. The mechanism gives greater influence to vehicles with consistently strong validation performance and regular participation, thereby limiting the influence of unstable, low-quality, or sporadic updates.

\item We introduce a split-model MTFL architecture tailored to hierarchical multi-hop VANETs. Shared autoencoder parameters are exchanged and hierarchically aggregated, while task-specific components remain local. This is the first framework to combine this architecture with task-aware clustering and reliability-weighted hierarchical aggregation in dynamic multi-hop VANETs.

\item We validate the effectiveness of the proposed framework through a comprehensive evaluation of task-aware and reliability-guided hierarchical MTFL in dynamic multi-hop VANETs. The results demonstrate consistent improvements in learning performance, convergence, robustness, and communication efficiency over representative MTFL baselines.

\end{itemize}

The remainder of this paper is organized as follows. Section~II presents the system model. Section~III describes the proposed hierarchical MTFL framework and its operation across vehicles, local aggregators, and the EPC. Section~IV provides the convergence analysis. Section~V presents the simulation setup and performance evaluation. Section~VI concludes the paper.

\section{System Model}

We consider a dynamic vehicular network comprising vehicles equipped with vehicle-to-vehicle (V2V), vehicle-to-infrastructure (V2I), and vehicle-to-network (V2N) communication interfaces. V2V communication may be supported by IEEE 802.11p~\cite{5888501}, IEEE 802.11bd~\cite{9779322}, or LTE-based Device-to-Device communication~\cite{7497762}, whereas V2I/V2N connectivity is provided through 5G New Radio V2X~\cite{9392787}.

Vehicles are organized into dynamic multi-hop clusters, each comprising one Cluster Head (CH) and a set of Cluster Members (CMs). Cluster membership and vehicle roles are updated based on network topology and mobility changes. The clustering procedure employs a tri-weighted association criterion based on relative mobility, similarity between shared encoder parameters, and task affinity derived from the active task sets of neighboring vehicles. This criterion promotes mobility-stable and task-coherent clusters.

Each vehicle owns a private non-IID dataset and supports a set of learning tasks. Its local model consists of a shared autoencoder, parameterized by encoder and decoder parameters, and a set of task-specific heads associated with learning tasks. The autoencoder learns a transferable latent representation across vehicles and tasks, while the task-specific heads remain local to preserve personalization and task-specific information.

The learning architecture follows a two-level hierarchy. CMs train their local autoencoder and task-specific heads and transmit only the updated autoencoder parameters and the required task and quality information to their CH. Each CH aggregates the received autoencoder updates to construct a cluster-level model. In every communication round, the CHs forward their cluster-level autoencoders to the Evolved Packet Core (EPC), which produces the global shared autoencoder and returns it to the CHs. Each CH then disseminates the received global autoencoder to its associated CMs, which use it to initialize the shared autoencoder in the next learning round. Raw data and task-specific head parameters are not exchanged.

Each vehicle maintains a Vehicle Information Base (VIB) containing its cluster role, hop count, mobility information, neighboring-vehicle information, active task set, and local model parameters. CHs additionally maintain participation and model-quality information for their associated CMs. The VIB is updated through vehicle state transitions and periodic ``HELLO\_PACKET'' exchanges, enabling distributed cluster formation and maintenance in the dynamic vehicular network.

\section{Multi-Task Federated Learning (MTFL) in Hierarchical Multi-Hop Clustered VANETs}

The proposed hierarchical MTFL framework comprises three interacting components: shared-representation learning at the vehicles, task-aware multi-hop cluster formation, and reliability-aware hierarchical aggregation. Vehicles are dynamically organized into multi-hop clusters according to a tri-weighted association metric that jointly considers relative mobility, shared-encoder similarity, and task-set affinity. Within each cluster, every participating CM jointly trains its shared autoencoder and local task-specific heads on its private non-IID data, while only the updated autoencoder parameters, the active task set, and the aggregate validation score are transmitted to the associated CH. The CH evaluates the reliability of the received updates based on historical validation performance and participation frequency, and constructs a reliability-weighted cluster-level autoencoder. In every communication round, the active CHs forward their cluster-level autoencoders to the EPC, where they are fused into a shared global autoencoder and then disseminated back through the hierarchy.

\subsection{Shared Autoencoder and Task-Specific Learning}

Each participating CM $i$ supports a nonempty set of learning tasks $\mathcal{T}_i$. Since different tasks may have different inputs and outputs, $x_{i,t}$ and $y_{i,t}$ denote, respectively, the input and target associated with task $t\in\mathcal{T}_i$. The task-dependent preprocessing used by the corresponding local learner maps each raw sample to the common input representation accepted by the shared autoencoder. The shared autoencoder contains an encoder and a decoder with parameters $\theta_i^{\mathrm{enc}}$ and $\theta_i^{\mathrm{dec}}$, respectively, while
$\{\phi_{i,t}\}_{t\in\mathcal{T}_i}$ denotes the local task-specific heads~\cite{cao2025mfed, collins2021fedrep, lu2024fedhca2}.

To make the round dependence explicit, let $r$ denote the communication round and $s\in\{0,\ldots,\tau-1\}$ the local SGD-step index. Define the shared-autoencoder parameter block and the local task-head block as
$\theta_i^{\mathrm{AE},(r,s)}=\{\theta_i^{\mathrm{enc},(r,s)},\theta_i^{\mathrm{dec},(r,s)}\}$ and
$\Phi_i^{(r,s)}=\{\phi_{i,t}^{(r,s)}\}_{t\in\mathcal{T}_i}$, respectively. For task $t$, the encoder, decoder, and local task head operate as
\begin{align}
 z_{i,t}^{(r,s)}
 &=
\mathrm{Enc}(x_{i,t};\theta_i^{\mathrm{enc},(r,s)}), \\
 \hat{x}_{i,t}^{(r,s)}
 &=
 \mathrm{Dec}(z_{i,t}^{(r,s)};\theta_i^{\mathrm{dec},(r,s)}), \\
 \hat{y}_{i,t}^{(r,s)}
 &=
 \mathrm{Head}_t(z_{i,t}^{(r,s)};\phi_{i,t}^{(r,s)}),
 \qquad t\in\mathcal{T}_i.
\end{align}

For task $t$, $\ell_t(\hat{y}_{i,t}^{(r,s)},y_{i,t})$ denotes the task-appropriate prediction loss, and
\begin{equation}
 L_{i,t}^{(r,s)}
 =
 \ell_t(\hat{y}_{i,t}^{(r,s)},y_{i,t}).
\end{equation}
Let $\xi_i^{(r,s)}$ denote the collection of mini-batches sampled by CM $i$ at local step $s$ in communication round $r$. The reconstruction loss is $\ell_{\mathrm{rec}}(\hat{x}_{i,t}^{(r,s)},x_{i,t})$, and the local multi-task objective is
\begin{equation}
\label{eq:localobjective}
\begin{split}
 &L_i^{(r,s)}\!\left(
 \theta_i^{\mathrm{AE},(r,s)},
 \Phi_i^{(r,s)};
 \xi_i^{(r,s)}
 \right)\\
 &\quad=
 \sum_{t\in\mathcal{T}_i}
 \left[
 L_{i,t}^{(r,s)}
 +
 \mu\,\ell_{\mathrm{rec}}(
 \hat{x}_{i,t}^{(r,s)},x_{i,t})
 \right],
\end{split}
\end{equation}
where $\mu\geq0$ controls the contribution of reconstruction learning. With learning rate $\eta>0$, and with $\nabla_{\vartheta}$ denoting the gradient with respect to parameter block $\vartheta$, the local SGD updates are
\begin{align}
 \theta_i^{\mathrm{enc},(r,s+1)}
 &\leftarrow
 \theta_i^{\mathrm{enc},(r,s)}
 -
 \eta\nabla_{\theta_i^{\mathrm{enc}}}L_i^{(r,s)},
 \label{eq:encoder_update}\\
 \theta_i^{\mathrm{dec},(r,s+1)}
 &\leftarrow
 \theta_i^{\mathrm{dec},(r,s)}
 -
 \eta\nabla_{\theta_i^{\mathrm{dec}}}L_i^{(r,s)},
 \label{eq:decoder_update}\\
 \phi_{i,t}^{(r,s+1)}
 &\leftarrow
 \phi_{i,t}^{(r,s)}
 -
 \eta\nabla_{\phi_{i,t}}L_{i,t}^{(r,s)},
 \quad t\in\mathcal{T}_i.
 \label{eq:head_update}
\end{align}

After completing the $\tau$ local SGD steps in the communication round $r$, CM $i$ sets
\begin{equation}
\theta_i^{\mathrm{AE},(r)}
=
\theta_i^{\mathrm{AE},(r,{\tau})}
=
\left\{
\theta_i^{\mathrm{enc},(r,{\tau})},
\theta_i^{\mathrm{dec},(r,{\tau})}
\right\}
\end{equation}
and evaluates each supported task head on the task-matched portion of the public validation buffer $\mathcal{D}_{\mathrm{pub}}$. Let $\mathrm{Acc}_{i,t}^{(r)}$ denote the resulting validation accuracy for task $t$. The aggregate validation score is
\begin{equation}
\label{eq:qualityscore}
 q_i^{(r)}
 =
 \frac{
 \sum_{t\in\mathcal{T}_i}
 \omega_{i,t}\,\mathrm{Acc}_{i,t}^{(r)}
 }{
 \sum_{t\in\mathcal{T}_i}\omega_{i,t}
 }.
\end{equation}
Here, $\omega_{i,t}=1$ gives equal task weighting, whereas $\omega_{i,t}=n_{i,t}^{\mathrm{val}}$ gives validation-sample weighting, where $n_{i,t}^{\mathrm{val}}$ is the number of public validation samples for task $t$. CM $i$ then transmits $\big(\theta_i^{\mathrm{AE},(r)},q_i^{(r)},\mathcal{T}_i\big)$ to its CH. The raw training data and the task-specific heads remain local.

For each CM, the CH maintains a participation-count-weighted historical validation average and the participation frequency:

\begin{align}
 \mathrm{HistAcc}_i^{(r)}
 &=
 \frac{
 (\kappa_i^{(r)}-1)\mathrm{HistAcc}_i^{(r-1)}
 +q_i^{(r)}
 }{\kappa_i^{(r)}},
 \label{eq:histacc}\\
 \mathrm{Freq}_i^{(r)}
 &=
 \frac{\kappa_i^{(r)}}{r},
 \label{eq:freq}
\end{align}
where $\kappa_i^{(r)}$ is the number of communication rounds in which CM $i$ has participated up to and including round $r$, with $\mathrm{HistAcc}_i^{(0)}=0$. Its reliability score is then given by
\begin{equation}
\label{eq:reliability}
 \mathrm{Rel}_i^{(r)}
 =
 \lambda_{\mathrm{acc}}\mathrm{HistAcc}_i^{(r)}
 +
 \lambda_{\mathrm{freq}}\mathrm{Freq}_i^{(r)},
 \qquad
 \lambda_{\mathrm{acc}}+\lambda_{\mathrm{freq}}=1.
\end{equation}

In communication round $r$, let $C$ denote the set of participating CMs in the considered cluster. The CH obtains the reliability-normalized weights and the cluster-level autoencoder as
\begin{equation}
\label{eq:clusteragg}
 w_i^{(r)}
 =
 \frac{\mathrm{Rel}_i^{(r)}}
 {\sum_{j\in C}\mathrm{Rel}_j^{(r)}},
 \qquad
 \overline{\theta}_{C}^{\mathrm{AE},(r)}
 =
 \sum_{i\in C}
 w_i^{(r)}
 \theta_i^{\mathrm{AE},(r)}.
\end{equation}

The CH then forwards the aggregated cluster-level autoencoder $\overline{\theta}_{C}^{\mathrm{AE},(r)}$ to the EPC. Let $\mathcal{C}_r$ denote the set of active clusters participating in communication round $r$. The EPC computes the global shared autoencoder as
\begin{equation}
\label{eq:epcaggregation}
\theta_{\mathrm{EPC}}^{\mathrm{AE},(r)}
=
\frac{1}{|\mathcal{C}_r|}
\sum_{C\in\mathcal{C}_r}
\overline{\theta}_{C}^{\mathrm{AE},(r)}.
\end{equation}
The EPC returns $\theta_{\mathrm{EPC}}^{\mathrm{AE},(r)}$ to the active CHs in the same communication round. Each CH then sets
\begin{equation}
\theta_{\mathrm{init},C}^{\mathrm{AE},(r+1)}
\leftarrow
\theta_{\mathrm{EPC}}^{\mathrm{AE},(r)}
\end{equation}
and disseminates this initialization to its associated CMs. Therefore, CM training, CH-level aggregation, and EPC-level aggregation are all indexed by the same communication-round index $r$, and the resulting global autoencoder initializes the shared autoencoder for communication round $r+1$. The task-specific heads $\{\phi_{i,t}\}_{t\in\mathcal{T}_i}$ are neither transmitted nor modified during CH-level or EPC-level aggregation and therefore remain local throughout the hierarchical learning procedure.

\subsection{Task-Aware Multi-Hop Clustering}

Cluster formation follows the distributed multi-hop clustering procedure introduced in~\cite{11000096}. A vehicle starts in the INITIAL state, constructs its Vehicle Information Base through periodic ``HELLO\_PACKET'' exchanges, and then enters the STATE ELECTION (SE) state. In SE, it first requests an association with a reachable CH and joins the accepting CH with the lowest association cost. If no direct CH is available, it requests association through eligible CMs whose current hop count is below the maximum hop limit and joins the lowest-cost accepted route. If neither a direct nor a multi-hop route is available, the vehicle becomes a CH only when its election cost is lower than that of the neighboring SE vehicles; otherwise, it remains in SE and repeats the decision after its VIB is updated. A CM returns to SE when its connection to its parent CM or CH is not maintained for the prescribed timeout period. A CH evaluates neighboring CHs and merges with a more suitable one by becoming a CM only if that neighboring CH has a lower association cost and accepts the connection request; otherwise, it remains a CH. A CH that remains disconnected from all CMs for the CH timeout also returns to SE.

Compared with the mobility- and model-similarity criterion in~\cite{11000096}, the association criterion used here additionally incorporates active-task similarity. For vehicles $i$ and $j$, the tri-weighted association cost is
\begin{equation}
\label{eq:avgcosim}
\begin{split}
 \mathrm{AvgCoSim}_{ij}&=\alpha\,\Delta v_{ij}\\
 &\quad+\beta\left(1-\cos\!\left(\theta_i^{\mathrm{enc}},\theta_j^{\mathrm{enc}}\right)\right)\\
 &\quad+\gamma\left(1-\mathrm{TaskSim}_{ij}\right),
\end{split}
\end{equation}
where:
\begin{itemize}
\item $\Delta v_{ij}$ is the average relative speed between vehicles $i$ and $j$, capturing their mobility dissimilarity.
\item $\cos(\theta^{\mathrm{enc}}_i, \theta^{\mathrm{enc}}_j)$ is the cosine similarity between the shared encoder parameters of vehicles $i$ and $j$, quantifying their alignment in feature representation learning. Therefore, $1-\cos(\theta^{\mathrm{enc}}_i, \theta^{\mathrm{enc}}_j)$ represents their shared-encoder dissimilarity.
\item $\mathrm{TaskSim}_{ij}$ is the task affinity score, computed as the Jaccard similarity between the sets of active tasks:
\begin{equation}
\mathrm{TaskSim}_{ij} = \frac{|\mathcal{T}_i \cap \mathcal{T}_j|}{|\mathcal{T}_i \cup \mathcal{T}_j|}
\end{equation}
where $\mathcal{T}_i$ and $\mathcal{T}_j$ are the sets of active tasks supported by vehicles $i$ and $j$, respectively. Accordingly, $1-\mathrm{TaskSim}_{ij}$ represents the task-set dissimilarity between the two vehicles.
\item $\alpha, \beta$, and $\gamma \in [0,1]$ are tunable weights with $\alpha + \beta + \gamma = 1$, enabling flexible balancing between the influence of mobility dissimilarity, shared-encoder dissimilarity, and task-set dissimilarity in cluster formation.
\end{itemize}
A lower value of $\mathrm{AvgCoSim}_{ij}$ indicates a more suitable association between vehicles $i$ and $j$. The balance between these weights can be adjusted according to application requirements: increasing $\alpha$ places greater emphasis on mobility stability by penalizing associations between vehicles with high relative speeds, while a higher $\gamma$ places greater emphasis on matching learning objectives by penalizing vehicles with dissimilar active task sets, which is beneficial in environments with high task heterogeneity. The shared-encoder dissimilarity term ensures that vehicles grouped together are aligned not only in their mobility behavior but also in the structure of their learned representations, thereby fostering efficient collaborative learning. The inherited clustering procedure is thereby adapted to form clusters that are mobility-stable, representation-aligned, and task-compatible for hierarchical MTFL.

\subsection{Hierarchical Multi-Task Federated Learning Procedure}

After the clusters are formed and maintained by the task-aware multi-hop clustering mechanism described above, the proposed hierarchical MTFL procedure is executed through Algorithms~1--3. Algorithm~1 is executed independently at each participating CM. At the beginning of communication round $r$, the CM receives the shared-autoencoder initialization $\theta_{\mathrm{init},CM}^{\mathrm{AE},(r)}$ from its associated CH and uses it to initialize its local shared autoencoder $\theta_{CM}^{\mathrm{AE},(r,0)}$ (Lines~1--2). Since the task-specific heads are not exchanged through the hierarchy, each local head $\phi_{CM,t}^{(r,0)}$ is initialized using its locally retained value from the preceding communication round (Line~3).

The CM then performs $\tau$ local SGD steps on its private non-IID data (Lines~4--12). At each local step $s$, it samples a mini-batch for every supported task $t\in\mathcal{T}_{CM}$ (Lines~4--5). For each supported task, the CM maps the task input $x_{CM,t}$ to the latent representation $z_{CM,t}$ using the shared encoder, reconstructs the input through the shared decoder, and applies the corresponding local task-specific head to obtain the task prediction $\hat{y}_{CM,t}$ (Lines~6--9). The task-specific prediction loss $L_{CM,t}$ is then evaluated using the loss function associated with task $t$ (Line~10). After processing all supported tasks, the CM constructs the complete local multi-task objective by summing the task-specific prediction losses and their corresponding reconstruction-loss terms, as given in~\eqref {eq:localobjective} (Line~11). The shared encoder and decoder are updated using the complete local objective, whereas each task-specific head is updated using only its corresponding task loss, according to~\eqref{eq:encoder_update}--\eqref{eq:head_update} (Line~12).

After completing the $\tau$ local updates, the CM sets its round-$r$ shared-autoencoder output to $\theta_{CM}^{\mathrm{AE},(r)}=\theta_{CM}^{\mathrm{AE},(r,{\tau})}$ and evaluates all supported task heads on the task-matched portions of the public validation buffer $\mathcal{D}_{\mathrm{pub}}$ (Line~13). The resulting task-specific validation accuracies $\mathrm{Acc}_{CM,t}^{(r)}$ are combined into the aggregate validation score $q_{CM}^{(r)}$ according to~\eqref{eq:qualityscore} (Line~14). Finally, the CM transmits only $\big(\theta_{CM}^{\mathrm{AE},(r)},q_{CM}^{(r)},\mathcal{T}_{CM}\big)$ to its associated CH (Line~15). The raw local data and the task-specific heads remain at the CM and are not included in the transmitted update.

Algorithm~2 is executed at each CH during every communication round. The CH first identifies the set $C$ of CMs that participate in round $r$ and collects their updated shared autoencoders, aggregate validation scores, and active task sets (Lines~1--3). For each participating CM, the CH updates the participation-count-weighted historical validation average using the newly received score $q_{CM}^{(r)}$ according to~\eqref{eq:histacc} (Lines~4--5). It also updates the participation frequency based on the number of rounds in which that CM has participated, relative to the current round index (Line~6). The CH then combines the historical validation score and participation frequency to compute the reliability score $\mathrm{Rel}_{CM}^{(r)}$ using~\eqref{eq:reliability} (Line~7).

After the reliability scores have been computed for all participating CMs, the CH normalizes them to obtain the aggregation weights $w_{CM}^{(r)}$ and performs reliability-weighted aggregation of the received shared-autoencoder parameters to construct the cluster-level autoencoder $\overline{\theta}_{C}^{\mathrm{AE},(r)}$ according to~\eqref{eq:clusteragg} (Line~8). The CH then sends $\overline{\theta}_{C}^{\mathrm{AE},(r)}$ to the EPC and receives the global shared autoencoder $\theta_{\mathrm{EPC}}^{\mathrm{AE},(r)}$ generated in the same communication round (Line~9). It assigns the received global autoencoder to $\theta_{\mathrm{init},C}^{\mathrm{AE},(r+1)}$ (Line~10) and broadcasts it to all associated CMs for initialization of communication round $r+1$ (Line~11). The task-specific heads are not involved in the CH-level or EPC-level aggregation and remain local.

Algorithm~3 is executed at the EPC during every communication round $r$. The EPC first collects the cluster-level autoencoders $\{\overline{\theta}_{C}^{\mathrm{AE},(r)}\}_{C\in\mathcal{C}_r}$ from the active CHs, where $\mathcal{C}_r$ denotes the set of active clusters in round $r$ (Lines~1--2). It then uniformly averages the received cluster-level autoencoders to construct the global shared autoencoder $\theta_{\mathrm{EPC}}^{\mathrm{AE},(r)}$ according to~\eqref{eq:epcaggregation} (Line~3). Finally, the EPC sends the resulting global autoencoder to every active CH in the same round, and each CH disseminates it to its associated CMs via Algorithm~2 to initialize round $r+1$ (Line~4). All task-specific heads remain local and unchanged under hierarchical aggregation.

\begin{algorithm}[ht]
\caption{Local Training and Validation at Cluster Member (CM) with Shared Autoencoder}
\For{each communication round $r$}{
    Receive $\theta_{\mathrm{init},CM}^{\mathrm{AE},(r)}$ from the associated CH and initialize $\theta_{CM}^{\mathrm{AE},(r,0)}\leftarrow\theta_{\mathrm{init},CM}^{\mathrm{AE},(r)}$\;
    Initialize $\phi_{CM,t}^{(r,0)}\leftarrow\phi_{CM,t}^{(r-1,{\tau})}$ for every supported task $t\in\mathcal{T}_{CM}$\;
    \For{$s=0,\ldots,\tau-1$}{
        Sample a mini-batch for every supported task $t\in\mathcal{T}_{CM}$\;
        \For{each local task $t\in\mathcal{T}_{CM}$}{
            $z_{CM,t}=\mathrm{Enc}(x_{CM,t};\theta_{CM}^{\mathrm{enc},(r,s)})$\;
            $\hat{x}_{CM,t}=\mathrm{Dec}(z_{CM,t};\theta_{CM}^{\mathrm{dec},(r,s)})$\;
            $\hat{y}_{CM,t}=\mathrm{Head}_t(z_{CM,t};\phi_{CM,t}^{(r,s)})$\;
            $L_{CM,t}=\ell_t(\hat{y}_{CM,t},y_{CM,t})$\;
        }
        $L_{CM}=\sum_{t\in\mathcal{T}_{CM}}\left[L_{CM,t}+\mu\ell_{\mathrm{rec}}(\hat{x}_{CM,t},x_{CM,t})\right]$\;
        Update $\theta_{CM}^{\mathrm{enc},(r,s+1)}$, $\theta_{CM}^{\mathrm{dec},(r,s+1)}$, and $\{\phi_{CM,t}^{(r,s+1)}\}_{t\in\mathcal{T}_{CM}}$ using~\eqref{eq:encoder_update}--\eqref{eq:head_update}\;
    }
    Set $\theta_{CM}^{\mathrm{AE},(r)}\leftarrow\theta_{CM}^{\mathrm{AE},(r,{\tau})}$ and evaluate all supported heads on the task-matched portions of $\mathcal{D}_{\mathrm{pub}}$\;
    Compute $\mathrm{Acc}_{CM,t}^{(r)}$ and $q_{CM}^{(r)}$ using~\eqref{eq:qualityscore}\;
    Send $\big(\theta_{CM}^{\mathrm{AE},(r)},q_{CM}^{(r)},\mathcal{T}_{CM}\big)$ to the CH\;
}
\end{algorithm}

\begin{algorithm}[ht]
\caption{Reliability-Weighted Aggregation at Cluster Head (CH)}
\For{each communication round $r$}{
    Let $C$ denote the set of participating CMs in the cluster during round $r$\;
    Collect $\big(\theta_{CM}^{\mathrm{AE},(r)},q_{CM}^{(r)},\mathcal{T}_{CM}\big)$ from every $CM\in C$\;
    \For{each participating $CM\in C$}{
        Update $\mathrm{HistAcc}_{CM}^{(r)}$ using~\eqref{eq:histacc}\;
        Update $\mathrm{Freq}_{CM}^{(r)}\leftarrow\kappa_{CM}^{(r)}/r$\;
        Compute $\mathrm{Rel}_{CM}^{(r)}$ using~\eqref{eq:reliability}\;
    }
    Compute $w_{CM}^{(r)}$ and $\overline{\theta}_{C}^{\mathrm{AE},(r)}$ using~\eqref{eq:clusteragg}\;
    Send $\overline{\theta}_{C}^{\mathrm{AE},(r)}$ to the EPC and receive $\theta_{\mathrm{EPC}}^{\mathrm{AE},(r)}$\;
    $\theta_{\mathrm{init},C}^{\mathrm{AE},(r+1)}\leftarrow\theta_{\mathrm{EPC}}^{\mathrm{AE},(r)}$\;
    Broadcast $\theta_{\mathrm{init},C}^{\mathrm{AE},(r+1)}$ to all associated CMs\;
}
\end{algorithm}

\begin{algorithm}[ht]
\caption{Global Autoencoder Aggregation at EPC}
\For{each communication round $r$}{
    Collect $\{\overline{\theta}_{C}^{\mathrm{AE},(r)}\}_{C\in\mathcal{C}_r}$ from all active CHs\;
    Compute $\theta_{\mathrm{EPC}}^{\mathrm{AE},(r)}=|\mathcal{C}_r|^{-1}\sum_{C\in\mathcal{C}_r}\overline{\theta}_{C}^{\mathrm{AE},(r)}$\;
    {Send $\theta_{\mathrm{EPC}}^{\mathrm{AE},(r)}$ to every active CH for dissemination to its associated CMs}\;
}
\end{algorithm}

\section{Convergence Analysis}

This section studies the convergence of the shared-autoencoder sequence
generated at the EPC. In Algorithms~1--3, the same index $r=1,2,\ldots$ denotes the communication round at the CM, CH, and EPC levels. Let $\theta^{(r)}$ denote the shared-autoencoder initialization used by the CMs at the beginning of round $r$, with $\theta^{(1)}$ denoting the initial shared autoencoder. The EPC output generated in round $r$ is denoted by $\theta^{(r+1)}\equiv\theta_{\mathrm{EPC}}^{\mathrm{AE},(r)}$ and initializes the CMs in round $r+1$. The active cluster partition is assumed to remain fixed during each communication round, although it may change between rounds. The analysis establishes an ergodic first-order stationarity guarantee for the EPC-level shared-autoencoder sequence.

\subsubsection{Global Objective and Relation to the Local Loss}

Let $\Phi_i=\{\phi_{i,t}\}_{t\in\mathcal{T}_i}$ denote the current local task-head block of CM $i$. In the generic objective and shared-gradient-oracle definitions below, the communication-round index $r$ and local-step index $s$ are omitted only for notational compactness; they are retained for all round- and step-dependent iterates, mini-batches, and aggregation weights. For the complete shared-autoencoder parameter block $\theta=\{\theta^{\mathrm{enc}},\theta^{\mathrm{dec}}\}$, define
\begin{equation}
F_i(\theta)
=
\mathbb{E}_{\xi_i}
\left[
L_i(\theta,\Phi_i;\xi_i)
\right],
\end{equation}
where $L_i(\theta,\Phi_i;\xi_i)$ is the parameterized local objective in~\eqref{eq:localobjective}. Only $\theta$ is the optimization variable in $F_i$; the effect of the concurrently updated local-head block $\Phi_i$ is incorporated through the shared-gradient oracle and the assumptions below. Throughout this section, $\mathbb{E}_{\xi_i}$ denotes expectation over a mini-batch $\xi_i\sim\mathcal{D}_i$ drawn from CM $i$'s local data distribution. An unsubscripted $\mathbb{E}$ denotes expectation over all algorithmic randomness, including mini-batch sampling, time-varying participation, cluster evolution, and adaptive aggregation weights, whereas $\mathbb{E}[\cdot\mid\mathcal{F}_r]$ denotes conditional expectation given the pre-round history $\mathcal{F}_r$ defined below. Define the mini-batch shared-gradient direction as
\begin{equation}
v_i(\theta;\xi_i)
\triangleq
\nabla_{\theta}L_i(\theta,\Phi_i;\xi_i),
\end{equation}
whereas $\nabla_{\theta}F_i(\theta)$ denotes the exact gradient of $F_i$ with respect to the shared-autoencoder parameter block $\theta$.

Let $N$ denote the total number of CMs in the vehicular population. For fixed target weights $p_i\geq0$ satisfying $\sum_{i=1}^{N}p_i=1$, define
\begin{equation}
F(\theta)
=
\sum_{i=1}^{N}p_iF_i(\theta).
\end{equation}
For any integer $R\geq1$, let $R$ denote the total number of communication rounds considered in the convergence analysis. The objective is to bound
\begin{equation}
\frac{1}{R}
\sum_{r=1}^{R}
\mathbb{E}
\left[
\left\|
\nabla_{\theta}F(\theta^{(r)})
\right\|^2
\right].
\end{equation}

\subsubsection{Equivalent EPC Update}

At the beginning of communication round $r$, participating CM $i$ initializes $\theta_i^{(r,0)}=\theta^{(r)}$ and performs $\tau$ local SGD steps:
\begin{equation}
\label{eq:conv_local_update}
\theta_i^{(r,s+1)}
=
\theta_i^{(r,s)}
-
\eta\,
v_i\!\left(
\theta_i^{(r,s)};
\xi_i^{(r,s)}
\right),
\quad
s=0,\ldots,\tau-1.
\end{equation}
Let $\mathcal{C}_r$ denote the set of active clusters in round $r$, and let $w_{i,C}^{(r)}$ denote the reliability-normalized CH weight of CM $i$ in cluster $C$. Since all participating CMs are initialized by $\theta^{(r)}$, the cluster-level autoencoder is
\begin{align}
\overline{\theta}_{C}^{(r)}
&=
\sum_{i\in C}
w_{i,C}^{(r)}
\theta_i^{(r,{\tau})}
\nonumber\\
&=
\theta^{(r)}
-
\eta
\sum_{i\in C}
w_{i,C}^{(r)}
\sum_{s=0}^{\tau-1}
v_i\!\left(
\theta_i^{(r,s)};
\xi_i^{(r,s)}
\right).
\end{align}
The EPC aggregates the cluster-level autoencoders in the same communication round:
\begin{equation}
\theta^{(r+1)}
=
\frac{1}{|\mathcal{C}_r|}
\sum_{C\in\mathcal{C}_r}
\overline{\theta}_{C}^{(r)}.
\end{equation}
For CM $i$ belonging to cluster $C$, define its effective EPC weight as
\begin{equation}
a_i^{(r)}
=
\frac{w_{i,C}^{(r)}}{|\mathcal{C}_r|},
\end{equation}
and set $a_i^{(r)}=0$ for a nonparticipating CM. Then $a_i^{(r)}\geq0$ and $\sum_{i=1}^{N}a_i^{(r)}=1$. Substitution gives the equivalent EPC update
\begin{equation}
\label{eq:epc_equivalent_update}
\theta^{(r+1)}
=
\theta^{(r)}
-
\eta H^{(r)},
\end{equation}
where
\begin{equation}
\label{eq:aggregated_direction}
H^{(r)}
=
\sum_{i=1}^{N}
a_i^{(r)}
\sum_{s=0}^{\tau-1}
v_i\!\left(
\theta_i^{(r,s)};
\xi_i^{(r,s)}
\right).
\end{equation}

\subsubsection{Assumptions}

\textbf{Assumption 1 (Smoothness):}
Each $F_i$ is differentiable and $L$-smooth:
\begin{equation}
\left\|
\nabla_{\theta}F_i(\theta)
-
\nabla_{\theta}F_i(\theta')
\right\|
\leq
L\left\|\theta-\theta'\right\|.
\end{equation}

\textbf{Assumption 2 (Stochastic Shared-Gradient Regularity):}
For some $G<\infty$,
\begin{equation}
\begin{aligned}
{\mathbb{E}_{\xi_i}}
\left[
v_i(\theta;\xi_i)
\mid\theta
\right]
&=
{\nabla_{\theta}}F_i(\theta),\\
{\mathbb{E}_{\xi_i}}
\left[
\left\|v_i(\theta;\xi_i)\right\|^2
\mid\theta
\right]
&\leq
G^2.
\end{aligned}
\end{equation}
Both conditional expectations are taken over the mini-batch draw $\xi_i\sim\mathcal{D}_i$ with a fixed shared autoencoder parameter $\theta$. Consequently, $\|{\nabla_{\theta}}F_i(\theta)\|\leq G$.

\textbf{Assumption 3 (Adaptive Aggregate-Noise Regularity):}
Define
\begin{equation}
\widetilde{H}^{(r)}
=
\sum_{i=1}^{N}
a_i^{(r)}
\sum_{s=0}^{\tau-1}
{\nabla_{\theta}}F_i\!\left(
\theta_i^{(r,s)}
\right).
\end{equation}
Let $\mathcal{F}_r$ denote the sigma-algebra generated by the initial shared autoencoder and by all cluster states, participation decisions, local task-head states, mini-batch samples, model iterates, validation scores, and reliability quantities from rounds $1,\ldots,r-1$, together with the active-cluster and participating-CM sets selected at the beginning of round $r$. The reliability-weighted stochastic error satisfies
\begin{equation}
\mathbb{E}
\left[
H^{(r)}-\widetilde{H}^{(r)}
\mid
\mathcal{F}_r
\right]
=
0
\end{equation}
and, for some $\sigma_{\mathrm{agg}}^2<\infty$,
\begin{equation}
\mathbb{E}
\left[
\left\|
H^{(r)}
-
\mathbb{E}
\left[
H^{(r)}
\mid
\mathcal{F}_r
\right]
\right\|^2
\middle|
\mathcal{F}_r
\right]
\leq
\tau\sigma_{\mathrm{agg}}^2.
\end{equation}
This assumption allows the adaptive reliability weights to depend on the current local updates.

\textbf{Assumption 4 (Reliability-Weight Regularity):}
Define the conditional mean effective weight as
\begin{equation}
\overline{a}_i^{(r)}
=
\mathbb{E}
\left[
a_i^{(r)}
\mid
\mathcal{F}_r
\right]
\end{equation}
and the mismatch between the effective reliability weights and the target objective weights as
\begin{equation}
\delta_r
=
\sum_{i=1}^{N}
\left|
\overline{a}_i^{(r)}-p_i
\right|.
\end{equation}

\textbf{Assumption 5 (Lower-Bounded Objective):}
There exists $F_\star>-\infty$ such that
\begin{equation}
F(\theta)\geq F_\star,
\qquad
\forall\theta.
\end{equation}

\subsubsection{Auxiliary Bounds}

\begin{lemma}[Local Drift]
Under Assumption~2, for every participating CM $i$, round $r$, and $s\in\{0,\ldots,\tau\}$, where $s=\tau$ denotes the post-update model produced by the final local update at $s=\tau-1$,
\begin{equation}
\label{eq:local_drift}
\mathbb{E}
\left[
\left\|
\theta_i^{(r,s)}-\theta^{(r)}
\right\|^2
\right]
\leq
\eta^2s^2G^2
\leq
\eta^2\tau^2G^2.
\end{equation}
\end{lemma}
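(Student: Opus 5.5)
Unrolling the local recursion \eqref{eq:conv_local_update} from the common initialization $\theta_i^{(r,0)}=\theta^{(r)}$ gives, for every $s\in\{0,\ldots,\tau\}$,
\begin{equation*}
\theta_i^{(r,s)}-\theta^{(r)}
=
-\eta\sum_{k=0}^{s-1}
v_i\!\left(\theta_i^{(r,k)};\xi_i^{(r,k)}\right).
\end{equation*}
The case $s=0$ is trivial, since both sides of \eqref{eq:local_drift} vanish. For $s\geq1$, I will apply the elementary inequality $\|\sum_{k=0}^{s-1}u_k\|^2\leq s\sum_{k=0}^{s-1}\|u_k\|^2$, which is Cauchy--Schwarz or Jensen applied to the sum of $s$ vectors. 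This gives
\begin{equation*}
\left\|\theta_i^{(r,s)}-\theta^{(r)}\right\|^2
\leq
\eta^2 s\sum_{k=0}^{s-1}
\left\|v_i\!\left(\theta_i^{(r,k)};\xi_i^{(r,k)}\right)\right\|^2 .
\end{equation*}
This deliberately crude bound uses no cancellation between steps, so the unbiasedness part of Assumption~2 is not needed. Only the second-moment bound will be used.

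Next I take expectations term by term using the tower property. The iterate $\theta_i^{(r,k)}$ is determined by the pre-round history and the mini-batches $\xi_i^{(r,0)},\ldots,\xi_i^{(r,k-1)}$. The fresh mini-batch $\xi_i^{(r,k)}$ is then drawn from $\mathcal{D}_i$. Conditioning on everything up to the draw of $\xi_i^{(r,k)}$ therefore fixes $\theta=\theta_i^{(r,k)}$, together with the current local-head block $\Phi_i$ that enters $v_i$. Assumption~2 then yields
\begin{equation*}
\mathbb{E}\!\left[\left\|v_i\!\left(\theta_i^{(r,k)};\xi_i^{(r,k)}\right)\right\|^2\,\middle|\,\theta_i^{(r,k)}\right]\leq G^2,
\end{equation*}
and hence the same bound $G^2$ unconditionally. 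Summing the $s$ terms gives $\mathbb{E}\|\theta_i^{(r,s)}-\theta^{(r)}\|^2\leq\eta^2 s\cdot sG^2=\eta^2s^2G^2$. Since $s\leq\tau$, this is at most $\eta^2\tau^2G^2$, which proves \eqref{eq:local_drift}.

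The main point requiring care is the conditioning step. Two features complicate it. First, the head block $\Phi_i$ changes concurrently within the round. Second, participation and the cluster partition are themselves random. I expect this is harmless because the set of participating CMs is fixed at the beginning of the round, that is, it is $\mathcal{F}_r$-measurable. I will also read Assumption~2 as holding uniformly over whatever head state and history are current when each mini-batch is drawn. With these two points, the conditional second-moment bound applies at every local step, and the rest is routine.
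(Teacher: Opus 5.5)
Your proof is correct and follows the paper's argument: unroll the $s$ local steps from $\theta_i^{(r,0)}=\theta^{(r)}$ and bound the sum crudely using only the second-moment part of Assumption~2. The paper uses Minkowski's inequality, $(\mathbb{E}\|\theta_i^{(r,s)}-\theta^{(r)}\|^2)^{1/2}\le\eta sG$, where you use the pointwise bound $\|\sum_k u_k\|^2\le s\sum_k\|u_k\|^2$ followed by expectation; both give the same $\eta^2s^2G^2$, and your explicit tower-property step spells out the conditioning the paper leaves implicit.
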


\begin{IEEEproof}
Summing the first $s$ local updates in~\eqref{eq:conv_local_update} gives
\begin{equation}
\theta_i^{(r,s)}-\theta^{(r)}
=
-\eta
\sum_{u=0}^{s-1}
v_i\!\left(
\theta_i^{(r,u)};
\xi_i^{(r,u)}
\right).
\end{equation}
Minkowski's inequality and Assumption~2 yield
\begin{equation}
\left(
\mathbb{E}
\left[
\left\|
\theta_i^{(r,s)}-\theta^{(r)}
\right\|^2
\right]
\right)^{1/2}
\leq
\eta sG.
\end{equation}
Squaring both sides and using $s\leq\tau$ proves the result.
\end{IEEEproof}

Define the conditional mean direction and its error as
\begin{equation}
m^{(r)}
=
\frac{1}{\tau}
\mathbb{E}
\left[
H^{(r)}
\mid
\mathcal{F}_r
\right],
\qquad
e^{(r)}
=
m^{(r)}-{\nabla_{\theta}}F(\theta^{(r)}).
\end{equation}

\begin{lemma}[Gradient-Direction Error]
Under Assumptions~1--4,
\begin{equation}
\label{eq:direction_error}
\mathbb{E}
\left[
\left\|
e^{(r)}
\right\|^2
\right]
\leq
2L^2\eta^2\tau^2G^2
+
2G^2\delta_r^2.
\end{equation}
\end{lemma}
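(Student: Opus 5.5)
We decompose $e^{(r)}$ into a client-drift term and a weight-mismatch term, bound each separately, and combine them with $\|x+y\|^2\le 2\|x\|^2+2\|y\|^2$. By Assumption~3, $\mathbb{E}[H^{(r)}\mid\mathcal{F}_r]=\mathbb{E}[\widetilde{H}^{(r)}\mid\mathcal{F}_r]$. Hence
\begin{equation*}
m^{(r)}=\frac{1}{\tau}\mathbb{E}\!\left[\sum_{i=1}^{N}a_i^{(r)}\sum_{s=0}^{\tau-1}\nabla_{\theta}F_i(\theta_i^{(r,s)})\,\middle|\,\mathcal{F}_r\right].
\end{equation*}
Adding and subtracting $\sum_i a_i^{(r)}\nabla_{\theta}F_i(\theta^{(r)})$ inside the conditional expectation, and using that $\theta^{(r)}$ is $\mathcal{F}_r$-measurable, we write $e^{(r)}=A^{(r)}+B^{(r)}$. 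Here
\begin{equation*}
A^{(r)}=\frac{1}{\tau}\mathbb{E}\!\left[\sum_{i}a_i^{(r)}\sum_{s=0}^{\tau-1}\bigl(\nabla_{\theta}F_i(\theta_i^{(r,s)})-\nabla_{\theta}F_i(\theta^{(r)})\bigr)\,\middle|\,\mathcal{F}_r\right]
\end{equation*}
and
\begin{equation*}
B^{(r)}=\sum_{i}\bigl(\overline{a}_i^{(r)}-p_i\bigr)\nabla_{\theta}F_i(\theta^{(r)}).
\end{equation*}
The identity for $B^{(r)}$ uses $\mathbb{E}[a_i^{(r)}\mid\mathcal{F}_r]=\overline{a}_i^{(r)}$ together with $\nabla_{\theta}F=\sum_i p_i\nabla_{\theta}F_i$.

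\textbf{Mismatch term.} The triangle inequality and the bound $\|\nabla_{\theta}F_i\|\le G$ from Assumption~2 give $\|B^{(r)}\|\le G\sum_i|\overline{a}_i^{(r)}-p_i|=G\delta_r$. Therefore $\|B^{(r)}\|^2\le G^2\delta_r^2$, and this holds deterministically.

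\textbf{Drift term.} Conditional Jensen gives $\|A^{(r)}\|^2\le\mathbb{E}[\|\cdot\|^2\mid\mathcal{F}_r]$. Inside, we are averaging over the probability weights $a_i^{(r)}/\tau$ on the pairs $(i,s)$, which sum to $1$. Convexity of $\|\cdot\|^2$ and $L$-smoothness (Assumption~1) bound the squared norm by $\frac{1}{\tau}\sum_i a_i^{(r)}\sum_s L^2\|\theta_i^{(r,s)}-\theta^{(r)}\|^2$. Taking total expectation, we want to invoke Lemma~1 to get $\mathbb{E}\|\theta_i^{(r,s)}-\theta^{(r)}\|^2\le\eta^2\tau^2G^2$, which would yield $\mathbb{E}\|A^{(r)}\|^2\le L^2\eta^2\tau^2G^2$.

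\textbf{Main obstacle.} The weights $a_i^{(r)}$ may depend on the same round's updates, so we cannot factor expectations of the form $\mathbb{E}[a_i^{(r)}\|\cdot\|^2]$. We avoid this as follows. Every drift term is pathwise bounded by $\eta^2\bigl(\sum_{u<s}\|v_i^{(r,u)}\|\bigr)^2$. Moreover, $\sum_i a_i^{(r)}\le1$. So we first bound the weighted average by $\max_i$ over participating CMs, or equivalently by the sum over participating CMs of the Lemma~1 bound weighted by the weights. If needed, we read Lemma~1 conditionally on $\mathcal{F}_r$, which its Minkowski argument supports by applying the second-moment bound of Assumption~2 stepwise. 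With $a_i^{(r)}\le1$ and the convex-combination structure, the resulting bound is $\eta^2\tau^2G^2$ uniformly. This is where the most care is needed.

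\textbf{Combination.} Finally,
\begin{equation*}
\mathbb{E}\|e^{(r)}\|^2\le2\mathbb{E}\|A^{(r)}\|^2+2\mathbb{E}\|B^{(r)}\|^2\le2L^2\eta^2\tau^2G^2+2G^2\delta_r^2,
\end{equation*}
which is the bound in \eqref{eq:direction_error}.
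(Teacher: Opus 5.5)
You use the same decomposition as the paper, but there is a genuine gap: the step you flag as the main obstacle is not closed by your patch, and it cannot be closed under the stated assumptions. Your split $e^{(r)}=A^{(r)}+B^{(r)}$, the bound $\|B^{(r)}\|\le G\delta_r$, and the final use of $\|u+v\|^2\le 2\|u\|^2+2\|v\|^2$ match the paper's argument. The paper adds and subtracts $\sum_i\overline{a}_i^{(r)}\nabla_{\theta}F_i(\theta^{(r)})$ and asserts that the drift component has root-mean-square at most $L\eta\tau G$ ``by Assumption~1 and Lemma~1''. It never addresses the fact that $a_i^{(r)}$ depends on the round's own updates.

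Write $D_{i,s}=\|\theta_i^{(r,s)}-\theta^{(r)}\|^2$. Each piece of your patch fails as follows.
\begin{itemize}
\item The pathwise bound $\sum_i a_i^{(r)}D_{i,s}\le\max_i D_{i,s}$ is true. But Lemma~1 bounds each $\mathbb{E}[D_{i,s}]$ separately, not $\mathbb{E}[\max_i D_{i,s}]$. The latter is in general only bounded by $\sum_i\mathbb{E}[D_{i,s}]$, i.e., $\eta^2s^2G^2$ times the number of participating CMs. Using $a_i^{(r)}\le 1$ gives the same factor.
\item ``Weighting the Lemma~1 bound by the weights'' is not equivalent to the max bound. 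It is exactly the illegitimate replacement of $\mathbb{E}[a_i^{(r)}D_{i,s}]$ by $\mathbb{E}[\overline{a}_i^{(r)}\,\mathbb{E}[D_{i,s}\mid\mathcal{F}_r]]$.
\item Reading Lemma~1 conditionally on $\mathcal{F}_r$ is valid but does not help. The weight $a_i^{(r)}$ is not $\mathcal{F}_r$-measurable: it depends on $q_i^{(r)}$, which Assumption~3 explicitly allows.
\item Your pathwise bound $\eta^2(\sum_{u<s}\|v_i(\theta_i^{(r,u)};\xi_i^{(r,u)})\|)^2$ becomes $\eta^2s^2G^2$ only if $\|v_i\|\le G$ almost surely. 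Assumption~2 bounds only a conditional second moment.
\end{itemize}

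This is not merely presentational: under Assumptions~1--4 as written, the drift bound can fail. Consider a single round with the following choices.
\begin{itemize}
\item $\tau=2$, scalar $\theta$, and identical $L$-smooth $F_i$ with bounded gradient, curvature $L$ near $\theta^{(r)}$, and $\nabla_{\theta}F_i(\theta^{(r)})=0$.
\item Additive i.i.d.\ symmetric gradient noise $\zeta_i^{(s)}$ equal to $\pm\Theta(\sqrt{N}G)$ with probability $\Theta(1/N)$, so its variance is $\Theta(G^2)$.
\item All weight placed on a CM maximizing $\zeta_i^{(0)}-\zeta_i^{(1)}$, with ties broken uniformly at random.
\end{itemize}
The law-preserving map $(\zeta^{(0)},\zeta^{(1)})\mapsto(-\zeta^{(1)},-\zeta^{(0)})$ leaves the selected CM unchanged and flips the sign of $H^{(r)}-\widetilde{H}^{(r)}$, so Assumption~3 holds. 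Exchangeability gives $\delta_r=0$ for $p_i=1/N$. Yet for small $\eta$, $\|e^{(r)}\|^2$ is of order $NL^2\eta^2G^2$, which exceeds $8L^2\eta^2G^2$ once $N$ is large.

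So the proof needs an extra hypothesis, stated explicitly. Either of the following works.
\begin{itemize}
\item Strengthen Assumption~2 to $\|v_i(\theta;\xi_i)\|\le G$ almost surely. Then your pathwise bound gives $\frac{L^2}{\tau}\sum_i a_i^{(r)}\sum_s\eta^2s^2G^2\le L^2\eta^2\tau^2G^2$ for arbitrary adaptive weights.
\item Require the weights to be $\mathcal{F}_r$-measurable, or dominated by $c$ times an $\mathcal{F}_r$-measurable probability vector at the cost of a factor $c$. Then the conditional form of Lemma~1 applies term by term.
\end{itemize}
A smaller point, inherited from the statement: $\delta_r$ is $\mathcal{F}_r$-measurable, so after total expectation your mismatch term is $2G^2\mathbb{E}[\delta_r^2]$. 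This equals $2G^2\delta_r^2$ only when $\delta_r$ is deterministic.
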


\begin{IEEEproof}
Add and subtract $\sum_i\overline{a}_i^{(r)}{\nabla_{\theta}}F_i(\theta^{(r)})$. By Assumption~1 and~\eqref{eq:local_drift}, the local-drift component is bounded in root mean square by $L\eta\tau G$. By Assumption~2 and the definition of $\delta_r$, the reliability-weight mismatch component is bounded by $G\delta_r$. Applying $\|u+v\|^2\leq2\|u\|^2+2\|v\|^2$ proves~\eqref{eq:direction_error}.
\end{IEEEproof}

\subsubsection{Main Convergence Result}

\begin{theorem}[Shared-Autoencoder Stationarity Bound]
Suppose Assumptions~1--5 hold and
\begin{equation}
\eta\tau
\leq
\frac{1}{4L}.
\end{equation}
Then the EPC iterates generated by~\eqref{eq:epc_equivalent_update} satisfy
\begin{multline}
\label{eq:main_conv_bound}
\frac{1}{R}
\sum_{r=1}^{R}
\mathbb{E}\!\left[
\left\|
{\nabla_{\theta}}F\!\left(\theta^{(r)}\right)
\right\|^2
\right]
\\
\leq
\frac{4\left(F\!\left(\theta^{(1)}\right)-F_\star\right)}
{\eta\tau R}
+
6L^2\eta^2\tau^2G^2
\\
\quad+
\frac{6G^2}{R}
\sum_{r=1}^{R}\delta_r^2
+
2L\eta\sigma_{\mathrm{agg}}^2.
\end{multline}
\end{theorem}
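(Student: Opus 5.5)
The plan is the standard one-step descent argument for smooth nonconvex objectives, applied to the equivalent EPC update~\eqref{eq:epc_equivalent_update}, followed by telescoping over rounds. Since each $F_i$ is $L$-smooth (Assumption~1) and $F=\sum_i p_iF_i$ with $\sum_i p_i=1$, the function $F$ is $L$-smooth. The descent lemma then gives
\begin{equation*}
F(\theta^{(r+1)})\leq F(\theta^{(r)})-\eta\left\langle\nabla_\theta F(\theta^{(r)}),H^{(r)}\right\rangle+\frac{L\eta^2}{2}\left\|H^{(r)}\right\|^2 .
\end{equation*}
I will take conditional expectation given $\mathcal{F}_r$. Here $\theta^{(r)}$ is $\mathcal{F}_r$-measurable and $\mathbb{E}[H^{(r)}\mid\mathcal{F}_r]=\tau m^{(r)}$. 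The inner-product term becomes $-\eta\tau\langle\nabla F,m^{(r)}\rangle$. Writing $m^{(r)}=\nabla F(\theta^{(r)})+e^{(r)}$ and using $\langle a,a+e\rangle\geq\frac12\|a\|^2-\frac12\|e\|^2$, this term is at most $-\frac{\eta\tau}{2}\|\nabla F\|^2+\frac{\eta\tau}{2}\|e^{(r)}\|^2$.

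For the quadratic term, I will use the bias–variance decomposition together with Assumption~3:
\begin{equation*}
\mathbb{E}\left[\|H^{(r)}\|^2\mid\mathcal{F}_r\right]=\tau^2\|m^{(r)}\|^2+\mathbb{E}\left[\|H^{(r)}-\tau m^{(r)}\|^2\mid\mathcal{F}_r\right]\leq 2\tau^2\|\nabla F\|^2+2\tau^2\|e^{(r)}\|^2+\tau\sigma_{\mathrm{agg}}^2 .
\end{equation*}
Multiplying by $\frac{L\eta^2}{2}$ gives $L\eta^2\tau^2\|\nabla F\|^2+L\eta^2\tau^2\|e^{(r)}\|^2+\frac{L\eta^2\tau}{2}\sigma_{\mathrm{agg}}^2$. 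The step-size condition $\eta\tau\leq 1/(4L)$ gives $L\eta^2\tau^2\leq\eta\tau/4$. Hence the net coefficient on $-\|\nabla F\|^2$ is at least $\eta\tau/4$, and the coefficient on $\|e^{(r)}\|^2$ is at most $\frac{3\eta\tau}{4}$. The per-round inequality, after taking total expectation, is
\begin{equation*}
\frac{\eta\tau}{4}\mathbb{E}\|\nabla F(\theta^{(r)})\|^2\leq\mathbb{E}F(\theta^{(r)})-\mathbb{E}F(\theta^{(r+1)})+\frac{3\eta\tau}{4}\mathbb{E}\|e^{(r)}\|^2+\frac{L\eta^2\tau}{2}\sigma_{\mathrm{agg}}^2 .
\end{equation*}

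Next I will insert Lemma~2 (Gradient-Direction Error), which bounds $\mathbb{E}\|e^{(r)}\|^2$ by $2L^2\eta^2\tau^2G^2+2G^2\delta_r^2$. I then sum over $r=1,\dots,R$ and telescope, using Assumption~5 to get $\mathbb{E}F(\theta^{(R+1)})\geq F_\star$. Finally I multiply by $4/(\eta\tau R)$. This produces the optimization term $4(F(\theta^{(1)})-F_\star)/(\eta\tau R)$, the drift term $6L^2\eta^2\tau^2G^2$, the mismatch term $\frac{6G^2}{R}\sum_r\delta_r^2$, and the noise term $2L\eta\sigma_{\mathrm{agg}}^2$. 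All four match~\eqref{eq:main_conv_bound}.

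The main obstacle is the bookkeeping of conditional expectations, in two places. First, $\delta_r$ is defined through $\overline{a}_i^{(r)}=\mathbb{E}[a_i^{(r)}\mid\mathcal{F}_r]$ and may be random. This is harmless because Lemma~2 is already stated with $\delta_r^2$, but it requires the $\mathcal{F}_r$-conditioning to be done consistently before taking total expectations. Second, the adaptive weights $a_i^{(r)}$ depend on the round-$r$ updates. So $\mathbb{E}[H^{(r)}\mid\mathcal{F}_r]$ must be handled purely through $m^{(r)}$ and Assumption~3, never by factoring the weights out of the expectation. Provided the cross term is treated only via $m^{(r)}$ and the variance only via Assumption~3, the constants $4$, $6$, $6$, $2$ come out exactly as stated.
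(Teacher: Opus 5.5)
Your proposal is correct and matches the paper's proof step for step: the descent lemma for $F$, the conditional mean $\mathbb{E}[H^{(r)}\mid\mathcal{F}_r]=\tau\bigl(\nabla_\theta F(\theta^{(r)})+e^{(r)}\bigr)$, the bias--variance bound from Assumption~3, and absorption via $L\eta^2\tau^2\leq\eta\tau/4$ to reach the coefficients $\eta\tau/4$ and $3\eta\tau/4$, followed by telescoping with Assumption~5 and substituting Lemma~2, which gives exactly the constants $4,6,6,2$. Your caveat that $\delta_r$ is only $\mathcal{F}_r$-measurable, so that Lemma~2 should strictly carry $\mathbb{E}[\delta_r^2]$, applies equally to the paper's argument and does not separate the two.
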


\begin{IEEEproof}
By the smoothness of $F$ and~\eqref{eq:epc_equivalent_update},
\begin{align}
F(\theta^{(r+1)})
&\leq
F(\theta^{(r)})
-
\eta
\left\langle
{\nabla_{\theta}}F(\theta^{(r)}),
H^{(r)}
\right\rangle
\nonumber\\
&\quad+
\frac{L\eta^2}{2}
\left\|H^{(r)}\right\|^2.
\end{align}
Let $h^{(r)}={\nabla_{\theta}}F(\theta^{(r)})$. Conditioning on $\mathcal{F}_r$ gives
\begin{equation}
\mathbb{E}
\left[
H^{(r)}
\mid
\mathcal{F}_r
\right]
=
\tau\left(h^{(r)}+e^{(r)}\right).
\end{equation}
Assumption~3 and the conditional variance decomposition give
\begin{align}
\mathbb{E}
\left[
\left\|H^{(r)}\right\|^2
\middle|
\mathcal{F}_r
\right]
&\leq
2\tau^2\left\|h^{(r)}\right\|^2
+
2\tau^2\left\|e^{(r)}\right\|^2
\nonumber\\
&\quad+
\tau\sigma_{\mathrm{agg}}^2.
\end{align}
Using $\langle h,h+e\rangle\geq\frac{1}{2}\|h\|^2-\frac{1}{2}\|e\|^2$ and $\eta\tau\leq(4L)^{-1}$ yields
\begin{align}
\mathbb{E}
\left[
F(\theta^{(r+1)})
\right]
&\leq
\mathbb{E}
\left[
F(\theta^{(r)})
\right]
-
\frac{\eta\tau}{4}
\mathbb{E}
\left[
\left\|{\nabla_{\theta}}F(\theta^{(r)})\right\|^2
\right]
\nonumber\\
&\quad+
\frac{3\eta\tau}{4}
\mathbb{E}
\left[
\left\|e^{(r)}\right\|^2
\right]
+
\frac{L\eta^2\tau}{2}
\sigma_{\mathrm{agg}}^2.
\end{align}
Summing over $r=1,\ldots,R$, applying Assumption~5, and substituting~\eqref{eq:direction_error} prove~\eqref{eq:main_conv_bound}.
\end{IEEEproof}

\textit{Convergence Remark:}
Theorem~1 shows that, with a constant learning rate, the EPC-level shared autoencoder approaches a stationary neighborhood of $F$. The term $6L^2\eta^2\tau^2G^2$ represents the local-model drift caused by the $\tau$ CM updates performed before CH and EPC aggregation in each communication round. The term $6G^2R^{-1}\sum_{r=1}^{R}\delta_r^2$ represents the mismatch between the adaptive reliability weights and the target objective weights, whereas $2L\eta\sigma_{\mathrm{agg}}^2$ represents aggregate stochastic variation.

The exact vanishing of the average squared gradient norm can be established by considering a sequence of constant learning rates $\{\eta_R\}_{R\geq1}$, where $\eta_R$ is the learning rate selected for an execution analyzed over $R$ communication rounds. For fixed $\tau$, if $\eta=\eta_R\rightarrow0$, $\eta_RR\rightarrow\infty$, and
\begin{equation}
\frac{1}{R}
\sum_{r=1}^{R}
\delta_r^2
\rightarrow0,
\end{equation}
then
\begin{equation}
\lim_{R\rightarrow\infty}
\frac{1}{R}
\sum_{r=1}^{R}
\mathbb{E}
\left[
\left\|
{\nabla_{\theta}}F(\theta^{(r)})
\right\|^2
\right]
=
0.
\end{equation}
Therefore, as the number of communication rounds grows, the average expected stationarity measure of the EPC-level shared autoencoder vanishes, implying convergence to first-order stationary points of $F$ in the ergodic sense.

\section{Performance Evaluation}

The simulations evaluate the performance of the proposed AutoEncoder-based Reliability-Optimized Hierarchical Multi-Task Federated Learning (AERO-HMTFL) algorithm against three representative benchmarks. RHFedMTL~\cite{yi2024rhfedmtl} performs hierarchical multi-task learning without a shared autoencoder by assigning each task to a base station. Vehicles within the corresponding coverage area transmit their task-specific model updates to that base station, which performs task-level aggregation, while the central server coordinates training across the hierarchy. For the personalized FL benchmark, the optimization principle of Ditto~\cite{li2021ditto} is adapted to the underlying structure of the proposed framework. Independent Ditto processes are executed in parallel for the different tasks, with each process learning a task-specific reference model and personalized local models for the vehicles performing that task. The third benchmark is M-Fed~\cite{cao2025mfed}, an autoencoder-based multi-task federated learning approach that employs an encoder–decoder architecture to facilitate deep feature extraction but does not utilize clustering. All clients participate in a single global aggregation stage without CH-level coordination or mobility-aware grouping. Together, these baselines enable a systematic evaluation of the individual contributions of autoencoder-based feature learning, hierarchical and mobility-aware clustering, and reliability-weighted multi-tier aggregation.

\subsection{Simulation Setup}

The simulations are implemented in Python, with vehicular mobility traces generated using the Simulation of Urban Mobility (SUMO)~\cite{SUMO}  and real-time data streaming handled through KAFKA~\cite{KAFKA}. This integrated setup provides a realistic and dynamically evolving vehicular network environment for federated learning experiments. SUMO accurately models individual driver behaviors, traffic flows, and mobility patterns, while KAFKA streams data packets and model parameters to emulate communication exchanges among vehicles in a distributed network. All federated learning models, including the proposed AERO-HMTFL and the benchmark baselines, are developed using PyTorch, ensuring efficient training, reproducibility, and seamless integration with mobility and communication components~\cite{haghighifard2026aerohmtfl}.


\subsection{Simulation Environment}

For communication modeling, IEEE~802.11p is used for vehicle-to-vehicle (V2V) communication, while 5G~NR links support vehicle-to-infrastructure (V2I) communication. V2V channels follow the Winner+~B1 propagation model, as described in~\cite{9599363}, ensuring realistic characterization of short-range vehicular communication. For V2I communication, the Friis propagation model is used, following~\cite{3gpptr38901}, which provides accurate path-loss estimation between vehicles and the 5G~NR base station.

To evaluate learning performance across multiple perception and classification tasks, we employ three widely used datasets, CIFAR-10, GTSRB, and MNIST, each representing a distinct level of complexity and non-IID data characteristics. CIFAR-10 consists of 60,000 RGB images across 10 object categories and serves as a benchmark for general-purpose vision tasks under moderate visual variability. GTSRB contains over 50,000 traffic sign images spanning 43 classes, capturing real-world conditions with variations in illumination, occlusion, and viewpoint, and is well aligned with vehicular perception scenarios. MNIST provides 70,000 handwritten digit images (28$\times$28) representing digits 0--9 and is used as a lightweight baseline to assess classification performance under low computational cost.

For the CIFAR-10 and GTSRB tasks, a lightweight convolutional neural network (LightCNN) is employed. The model comprises two convolutional layers (3$\rightarrow$16 and 16$\rightarrow$32), batch normalization, ReLU activations, max pooling, global average pooling, dropout, and a fully connected output layer. Training is performed using stochastic gradient descent (SGD) with a learning rate of 0.01 (decayed by 0.95 every 10 communication rounds), momentum of 0.9, and batch sizes of 64 and 128 for training and testing, respectively.

For the MNIST task, a linear \texttt{SGDClassifier} is used, trained with an adaptive learning-rate schedule and incremental updates via the \texttt{partial\_fit} mechanism for up to 100 iterations. This configuration allows efficient adaptation to client-specific non-IID data streams.

Additionally, for the model compression task, a linear autoencoder is employed. The autoencoder consists of encoder and decoder matrices, $W_{\text{enc}}$ and $W_{\text{dec}}$, with a latent dimension not exceeding 256. Training is conducted using SGD with a learning rate of $10^{-3}$, minimizing mean squared error (MSE) to learn compact latent representations that support shared feature extraction across heterogeneous tasks.

To reproduce the task heterogeneity, each vehicle $i$ is randomly
assigned a nonempty subset
\begin{equation}
\mathcal{T}_i
\subseteq
\left\{
\mathrm{MNIST},
\mathrm{CIFAR\mbox{-}10},
\mathrm{GTSRB}
\right\},
\qquad
1\leq |\mathcal{T}_i|\leq3.
\end{equation}
Accordingly, some vehicles support only one learning task, some support two, and the remaining support all three. For each vehicle, the number of supported tasks is first determined by the adopted random seed, after which the corresponding tasks are sampled without replacement from the three available tasks. This random subset assignment creates task heterogeneity across the vehicular population while still allowing vehicles with partially overlapping task sets to exchange transferable representations through the shared autoencoder.

The EPC aggregates only the shared-autoencoder parameters and does not store or aggregate task-specific heads. Therefore, the ``EPC accuracy'' reported for task $t$ in Figs.~1--4 is an offline task-wise evaluation metric associated with the EPC-generated shared autoencoder, rather than the output of a task classifier located at the EPC. Let $\mathcal{V}_t^{(r)}$ denote the set of participating CMs that support task $t$ in round $r$. After the global autoencoder is disseminated, each CM $i\in\mathcal{V}_t^{(r)}$ combines it with its locally retained task head and evaluates its task-matched test data. The reported value is
\begin{equation}
{
\mathrm{Acc}_{\mathrm{EPC},t}^{(r)}
=
\frac{1}{|\mathcal{V}_t^{(r)}|}
\sum_{i\in\mathcal{V}_t^{(r)}}
\mathrm{Acc}_{i,t}
\!\left(
\theta_{\mathrm{EPC}}^{\mathrm{AE},(r)},
\phi_{i,t}^{(r,\tau)};
\mathcal{D}_{i,t}^{\mathrm{test}}
\right).
}
\label{eq:epc_task_accuracy}
\end{equation}
This evaluation is performed by the simulation evaluator and does not require transmitting the local task heads to the EPC or introducing additional protocol packets.

The performance of AERO-HMTFL and all benchmark methods is evaluated using two primary metrics: model accuracy, which measures predictive performance~\cite{Goodfellow2016}, and convergence behavior, defined as the number of communication rounds required for stabilization. Convergence is declared when the improvement in model accuracy falls below a threshold $\epsilon$ for 5 consecutive rounds~\cite{mcmahan2017fedavg}.

\subsection{Performance Comparison of Proposed Algorithm with Benchmark Algorithms}
\begin{figure}[!t]
     \centering
     \begin{subfigure}{0.45\textwidth}
         \centering
         \includegraphics[width=\textwidth]{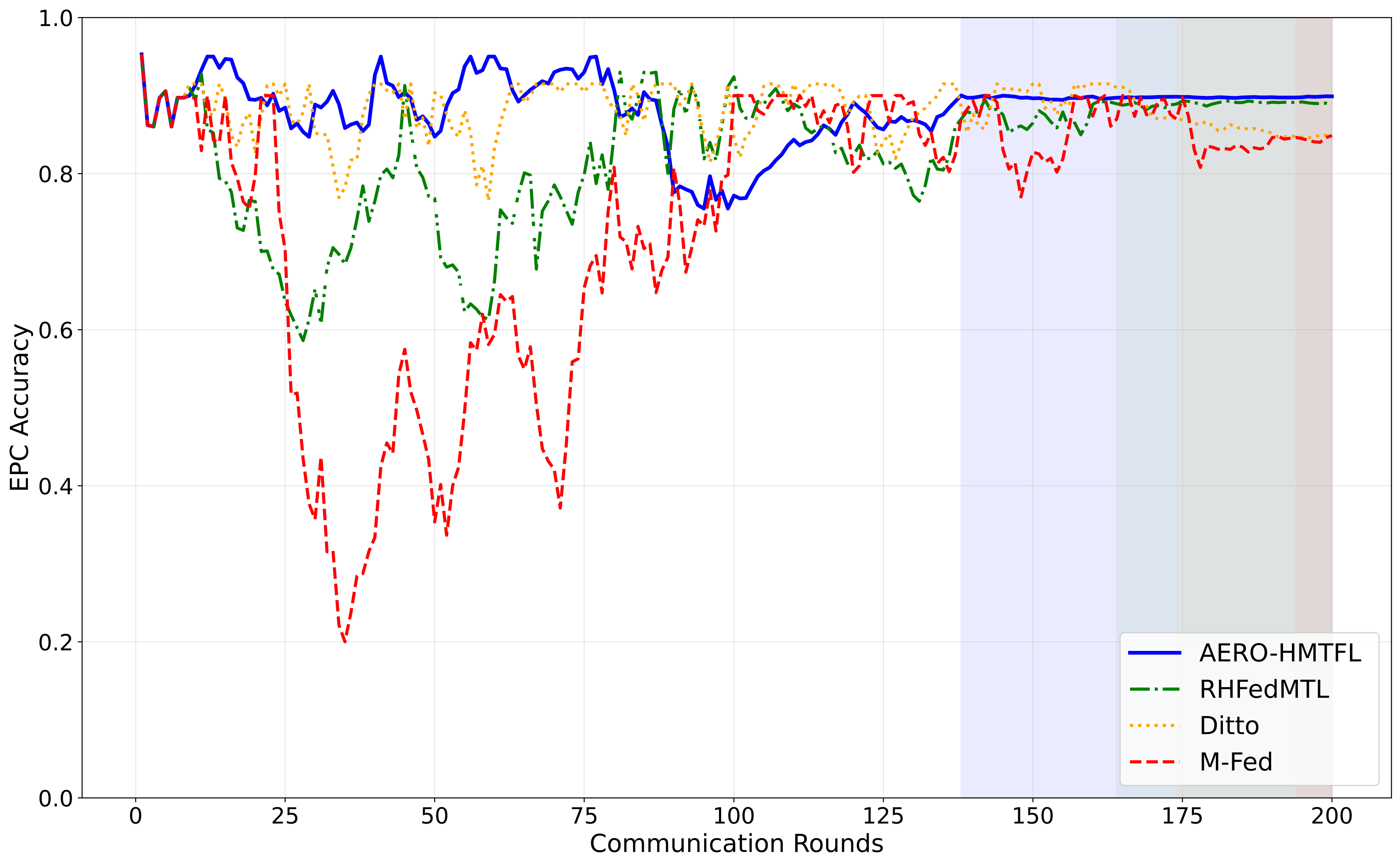}
         \subcaption*{(a)}
         \label{fig: MNIST_20}
     \end{subfigure}
     \vfill
     \begin{subfigure}{0.45\textwidth}
         \centering
         \includegraphics[width=\textwidth]{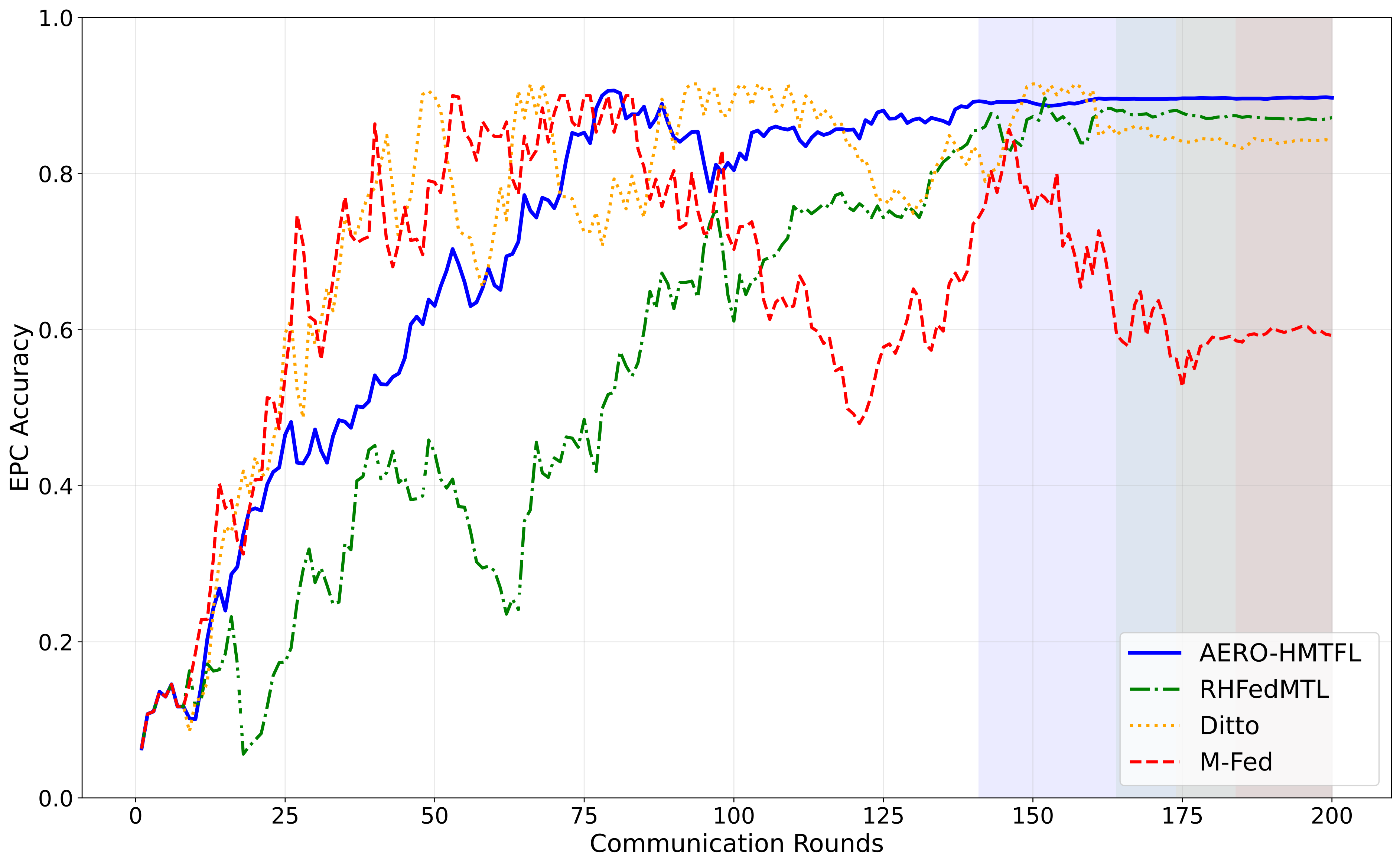}
         \subcaption*{(b)}
         \label{fig: GTSRB_20}
     \end{subfigure}
     \vfill
     \begin{subfigure}{0.45\textwidth}
         \centering
         \includegraphics[width=\textwidth]{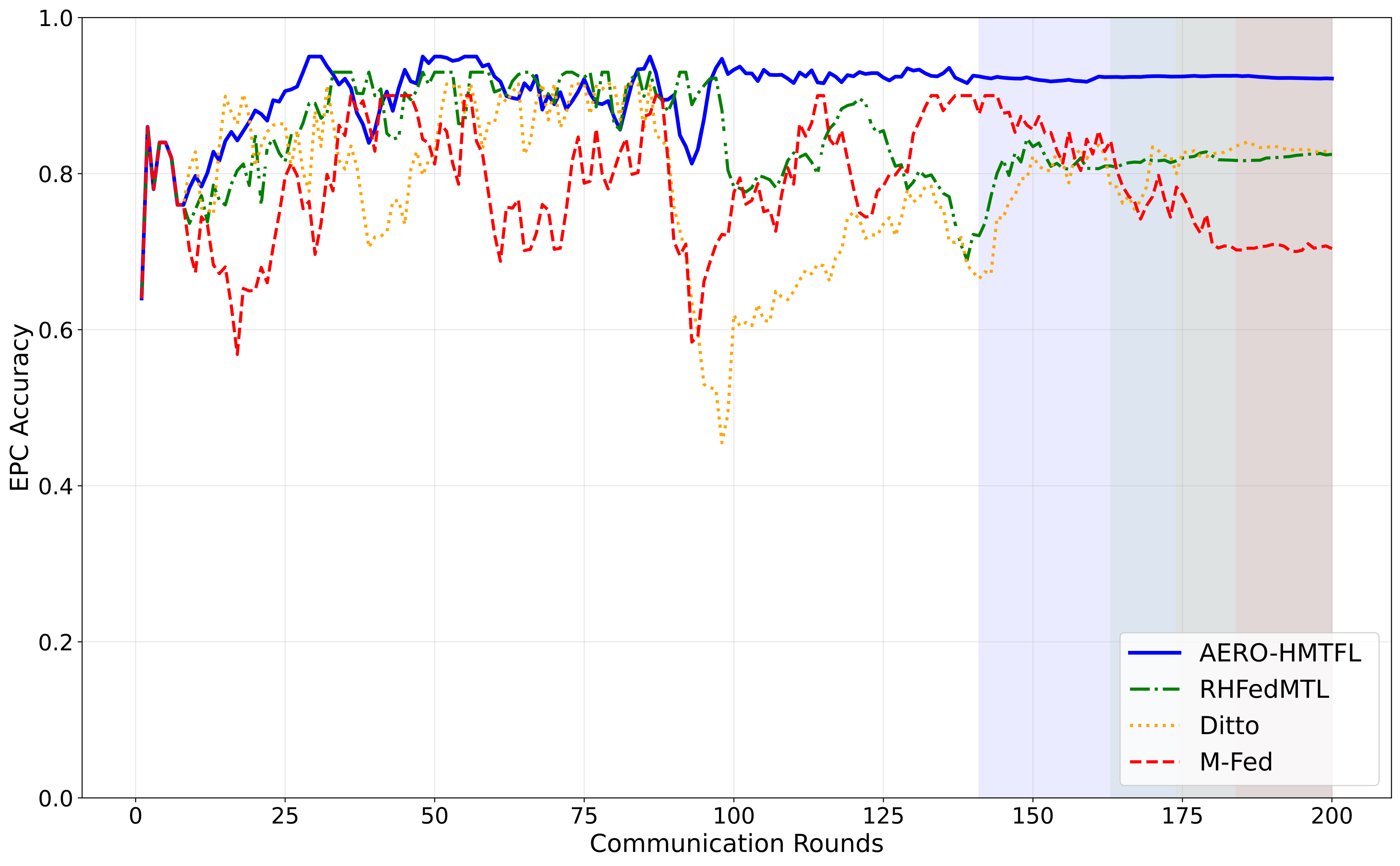}
         \subcaption*{(c)}
         \label{fig: CIFAR_20}
     \end{subfigure}
        \caption{Comparison of EPC accuracy across different tasks in the 20-vehicle scenario, with a 100 m transmission range and single-hop communication links. a) MNIST, b) GTSRB, c) CIFAR-10.}
        \label{fig: EPC Accuracy for 20 Vehicles}
\end{figure}
\begin{figure}[!t]
     \centering
     \begin{subfigure}{0.45\textwidth}
         \centering
         \includegraphics[width=\textwidth]{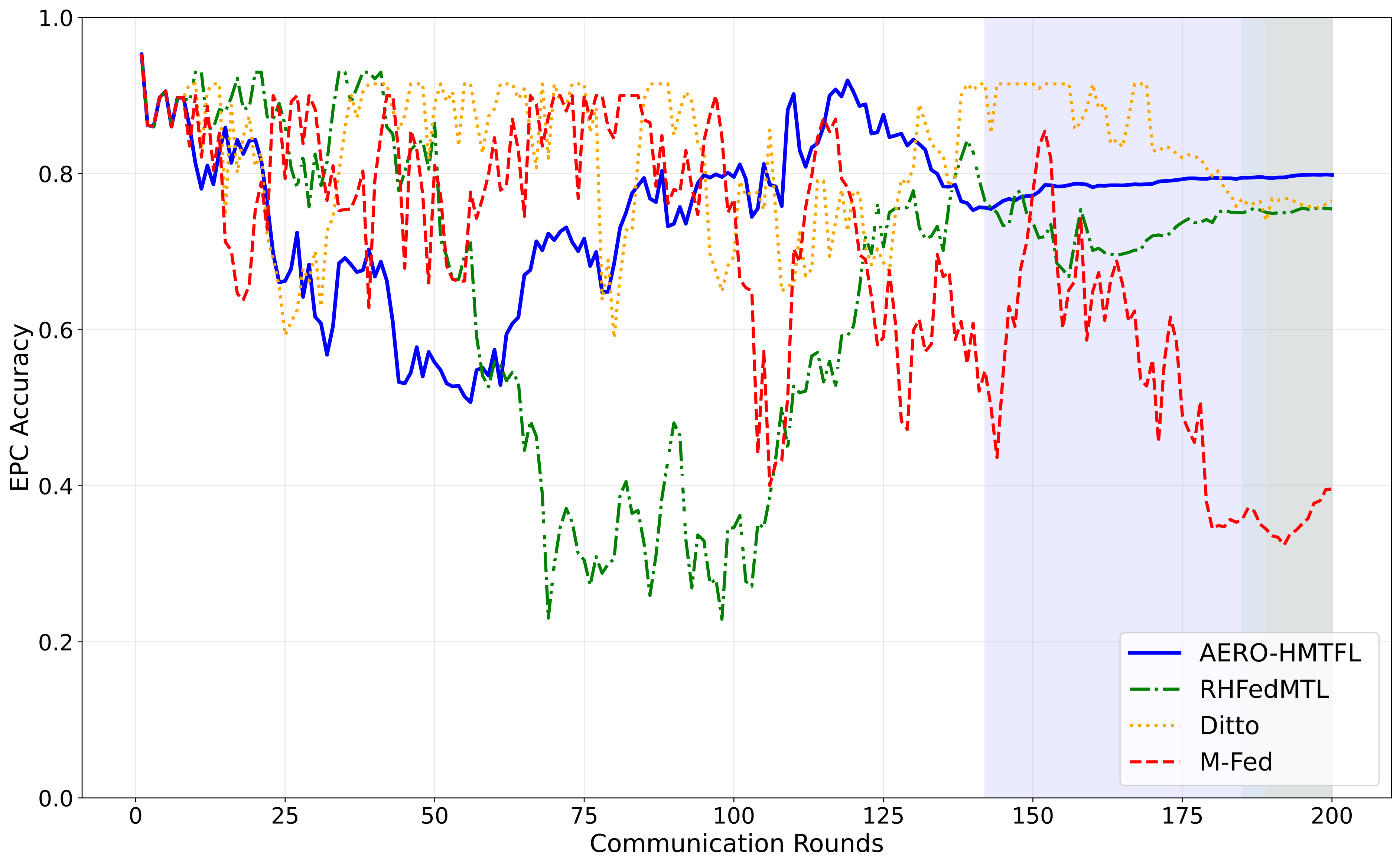}
         \subcaption*{(a)}
         \label{fig: MNIST_50}
     \end{subfigure}
     \vfill
     \begin{subfigure}{0.45\textwidth}
         \centering
         \includegraphics[width=\textwidth]{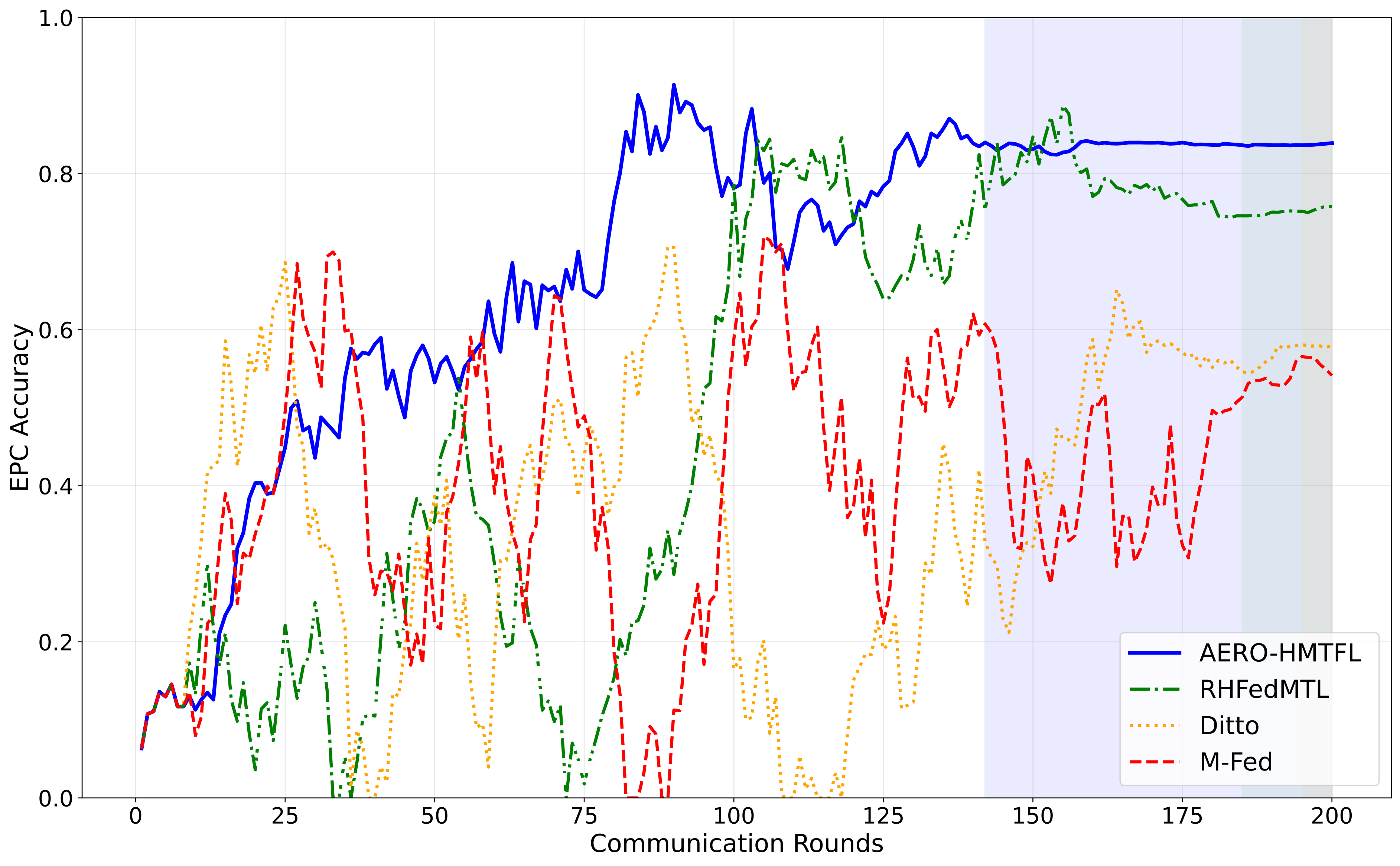}
         \subcaption*{(b)}
         \label{fig: GTSRB_50}
     \end{subfigure}
     \vfill
     \begin{subfigure}{0.45\textwidth}
         \centering
         \includegraphics[width=\textwidth]{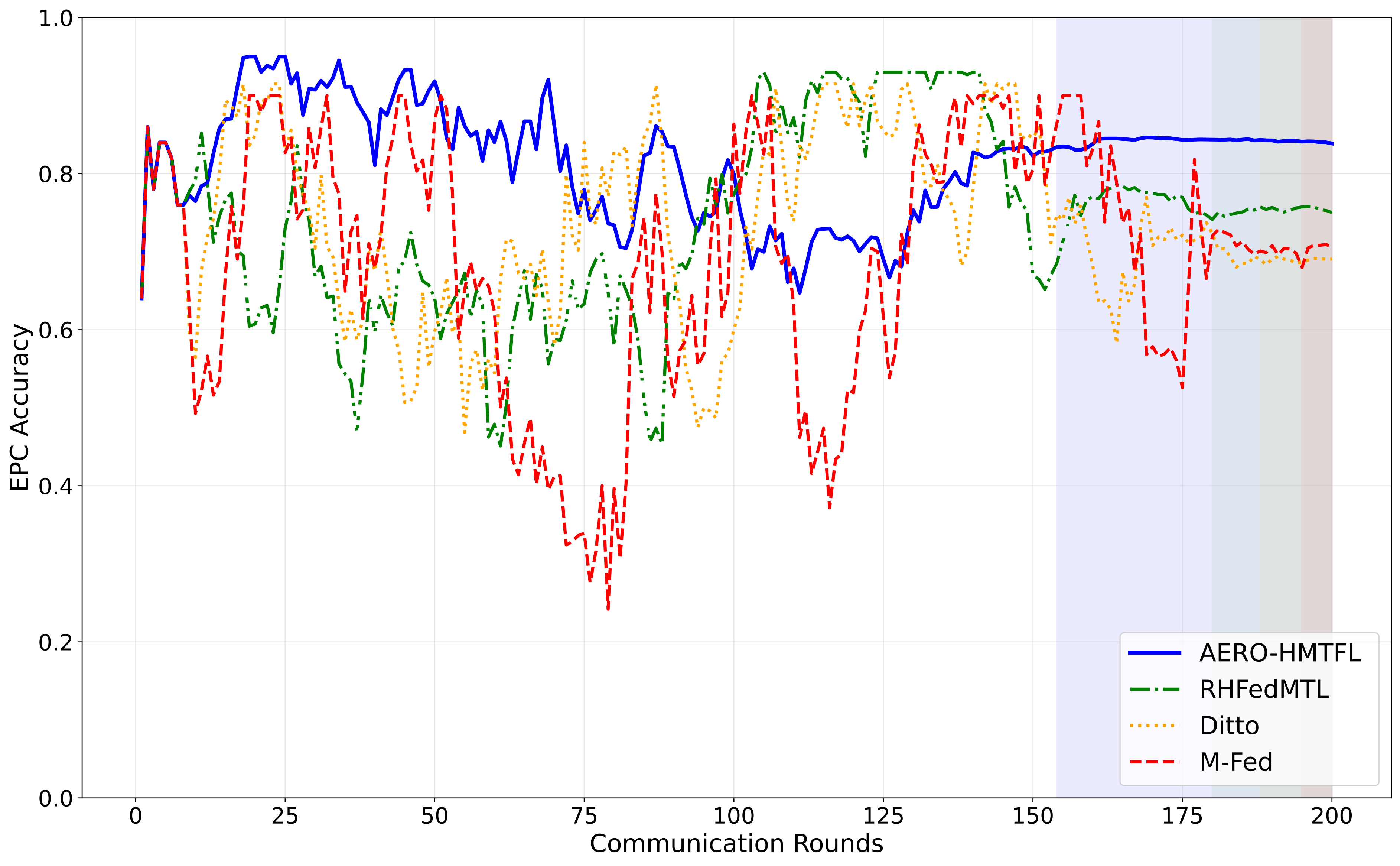}
         \subcaption*{(c)}
         \label{fig: CIFAR_50}
     \end{subfigure}
        \caption{Comparison of EPC accuracy across different tasks in the 50-vehicle scenario, with a 100 m transmission range and single-hop communication links. a) MNIST, b) GTSRB, c) CIFAR-10.}
        \label{fig: EPC Accuracy for 50 Vehicles}
\end{figure}
Figures 1 and 2 present the EPC accuracy trajectories of AERO-HMTFL and the three baselines (RHFedMTL, Ditto, and M-Fed) across communication rounds on MNIST, GTSRB, and CIFAR-10 at two vehicular densities (20 and 50 vehicles).

For MNIST, AERO-HMTFL quickly stabilizes in a high-accuracy region and sustains this plateau with limited variance under both densities. By contrast, RHFedMTL, Ditto, and M-Fed exhibit longer-lasting transient behavior and recurrent drops in accuracy. These instabilities reflect their limited ability to control update inconsistency under mobility. RHFedMTL is more affected by representation mismatch across clients, Ditto is sensitive to highly heterogeneous client updates without task-aligned coordination, and M-Fed is vulnerable to unstable clients because it lacks mechanisms to down-weight unreliable contributions. As a result, AERO-HMTFL maintains approximately 5\% higher sustained EPC accuracy than the strongest baseline in the denser scenario while remaining noticeably smoother.

For GTSRB, the separation is clearer due to the greater perceptual complexity and stronger heterogeneity. RHFedMTL exhibits higher variance before stabilizing. Ditto displays sharper oscillations as client participation changes, and M-Fed suffers from repeated collapses when inconsistent updates propagate into aggregation. In contrast, AERO-HMTFL suppresses these failure modes by jointly reducing feature inconsistency and mitigating unreliable updates through reliability-aware aggregation, resulting in approximately 2\% higher sustained accuracy than the best baseline with 20 vehicles and around 12\% higher with 50 vehicles.

On CIFAR-10, the most challenging dataset, baseline limitations become more apparent because gradient diversity is greater and the cost of unmanaged heterogeneity is higher. RHFedMTL and Ditto intermittently approach competitive accuracy but remain volatile, settling into less stable regimes, while M-Fed continues to experience pronounced drops. AERO-HMTFL remains stable and achieves the highest sustained EPC accuracy, reaching approximately 13\% improvement over the strongest baseline at the final plateau across both densities.

Overall, the results confirm that AERO-HMTFL provides superior convergence quality and predictive accuracy under short-range, single-hop vehicular communication. The baselines are primarily limited by (i) weaker robustness to representation mismatch (RHFedMTL), (ii) higher sensitivity to heterogeneous and time-varying client participation without coordination (Ditto), and (iii) susceptibility to unstable updates due to the absence of reliability-aware aggregation (M-Fed), whereas AERO-HMTFL explicitly addresses these issues through latent feature learning, reliability weighting, and hierarchical task-aware coordination.

\begin{figure}[!t]
     \centering
     \begin{subfigure}{0.45\textwidth}
         \centering
         \includegraphics[width=\textwidth]{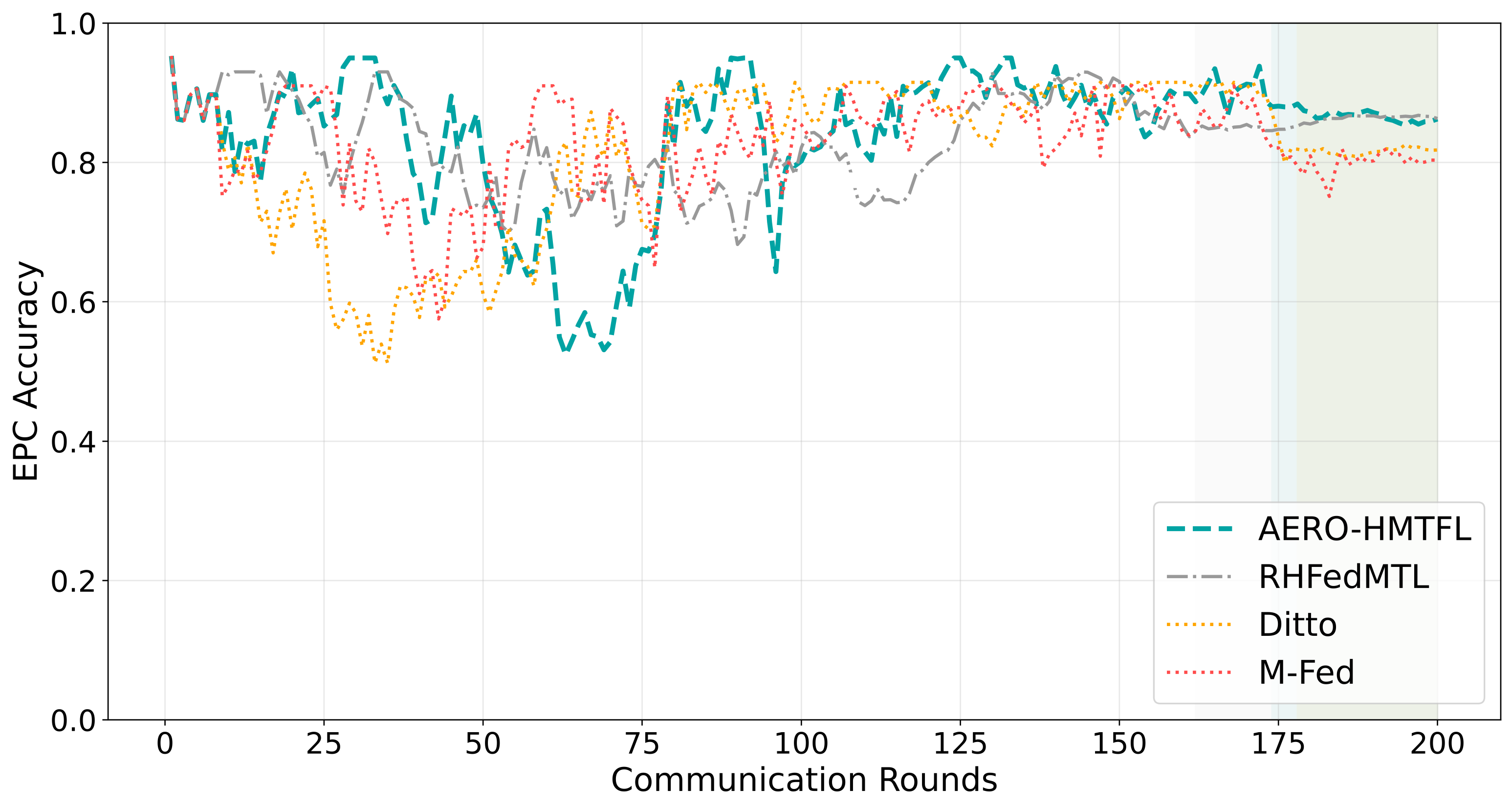}
         \subcaption*{(a)}
         \label{fig: MNIST_500_20}
     \end{subfigure}
     \vfill
     \begin{subfigure}{0.45\textwidth}
         \centering
         \includegraphics[width=\textwidth]{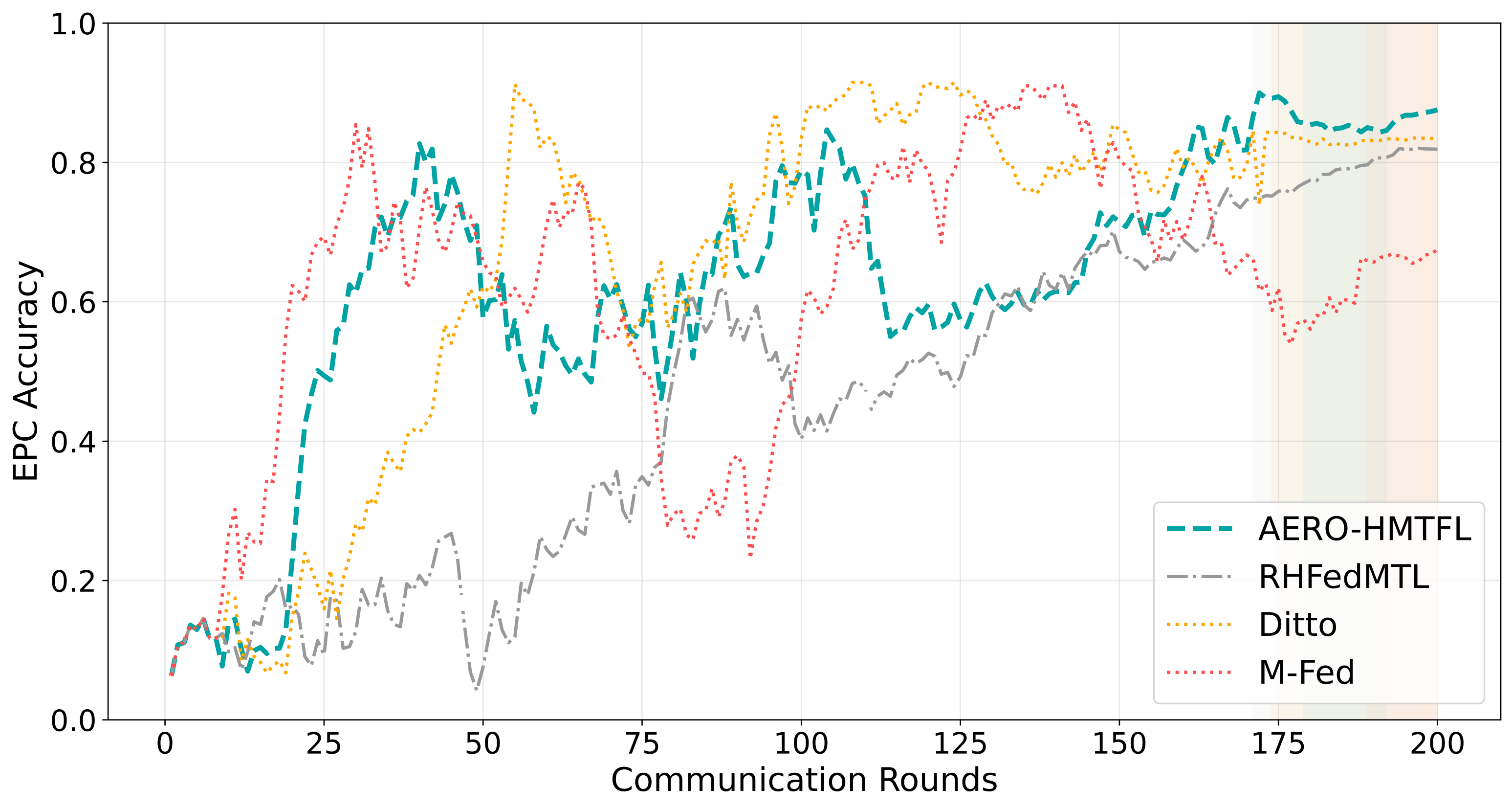}
         \subcaption*{(b)}
         \label{fig: GTSRB_500_20}
     \end{subfigure}
     \vfill
     \begin{subfigure}{0.45\textwidth}
         \centering
         \includegraphics[width=\textwidth]{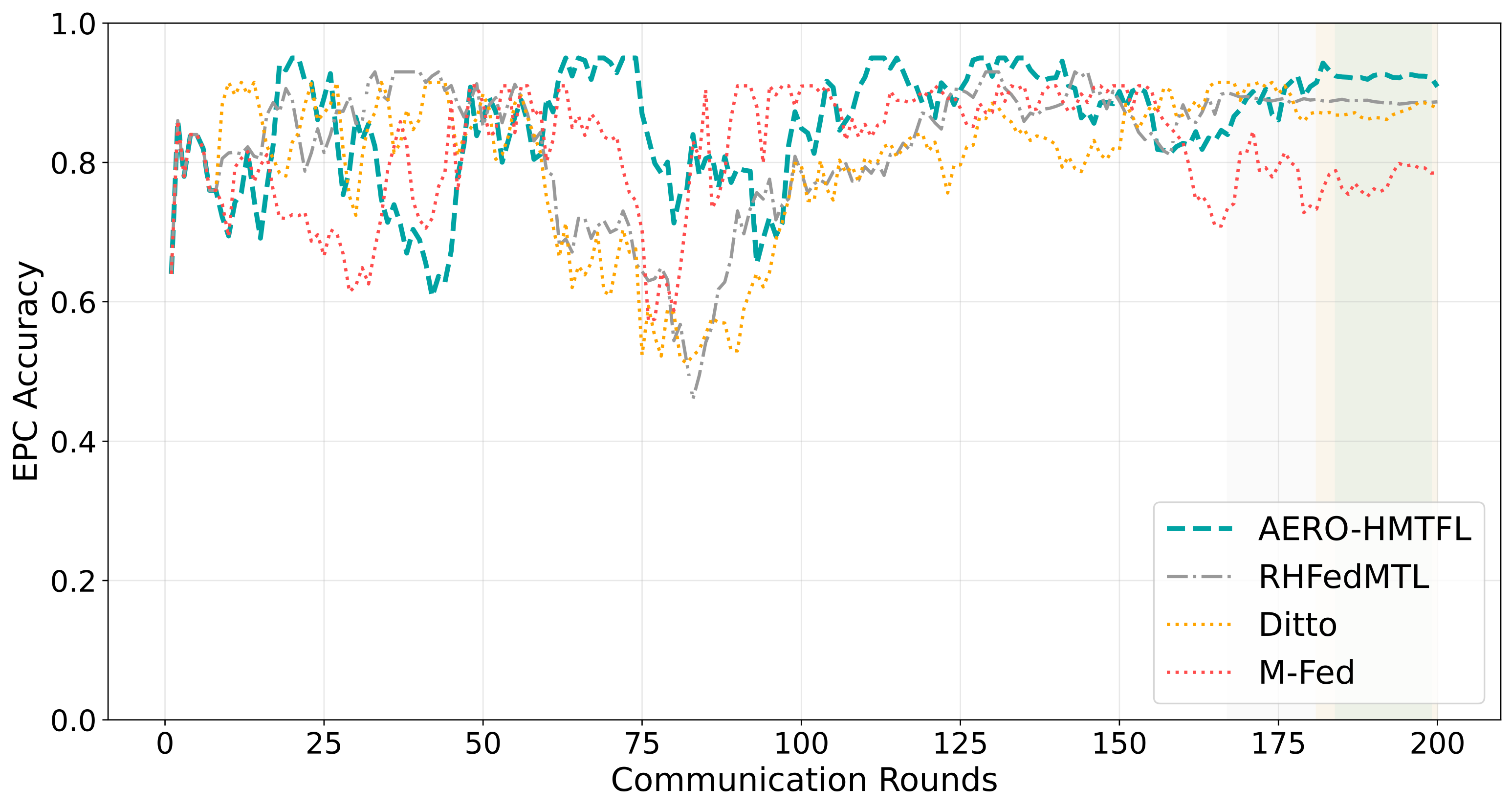}
         \subcaption*{(c)}
         \label{fig: CIFAR_500_20}
     \end{subfigure}
        \caption{Comparison of EPC accuracy across different tasks in the 20-vehicle scenario, with a 500 m transmission range and single-hop communication link. a) MNIST, b) GTSRB, c) CIFAR-10.}
        \label{fig: EPC Accuracy for 20 Vehicles}
\end{figure}

Figure 3 evaluates the EPC accuracy of AERO-HMTFL against RHFedMTL, Ditto, and M-Fed in the 20-vehicle scenario when the transmission range is increased to 500 m under single-hop communication. In shorter-range settings, the wider coverage yields a denser, less fragmented communication graph, enabling more vehicles to participate consistently in each aggregation step.

In MNIST, all methods operate in a high-accuracy regime, but AERO-HMTFL exhibits the tightest late-round variance. Ditto and M-Fed still experience intermittent drops during the transient phase, as expected, because denser connectivity also increases exposure to heterogeneous client updates, and methods without reliability control remain sensitive to inconsistent gradients. Consequently, AERO-HMTFL maintains a modest yet consistent advantage, achieving roughly 5–8\% higher sustained accuracy than the weakest baseline.

For GTSRB, the separation is more pronounced. AERO-HMTFL sustains the highest stable accuracy band, while RHFedMTL settles lower, and Ditto and M-Fed remain noticeably more oscillatory. The stronger gains on GTSRB are consistent with the task’s higher heterogeneity and perception complexity, where dense neighborhoods help only if the aggregation mechanism can suppress unreliable updates. In the final regime, AERO-HMTFL is about 8–12\% higher than RHFedMTL and can exceed 25\% over M-Fed.

On CIFAR-10, AERO-HMTFL again achieves the most stable and highest sustained EPC accuracy. Although RHFedMTL and Ditto occasionally approach similar levels, they remain more variable and typically plateau below AERO-HMTFL, while M-Fed exhibits larger fluctuations. This is consistent with CIFAR-10’s higher gradient diversity, where reliability-aware hierarchical aggregation is critical to dampen oscillations. In late rounds, AERO-HMTFL attains roughly 10–15\% higher sustained accuracy than M-Fed.

Overall, Figure 3 indicates that increasing the range to 500 m improves connectivity and reduces fragmentation. However, AERO-HMTFL benefits the most, as denser neighborhoods enhance reliability estimation and stabilize hierarchical aggregation, resulting in consistently higher EPC accuracy and smoother convergence behavior across MNIST, GTSRB, and CIFAR-10.

\begin{figure}[!t]
     \centering
     \begin{subfigure}{0.45\textwidth}
         \centering
         \includegraphics[width=\textwidth]{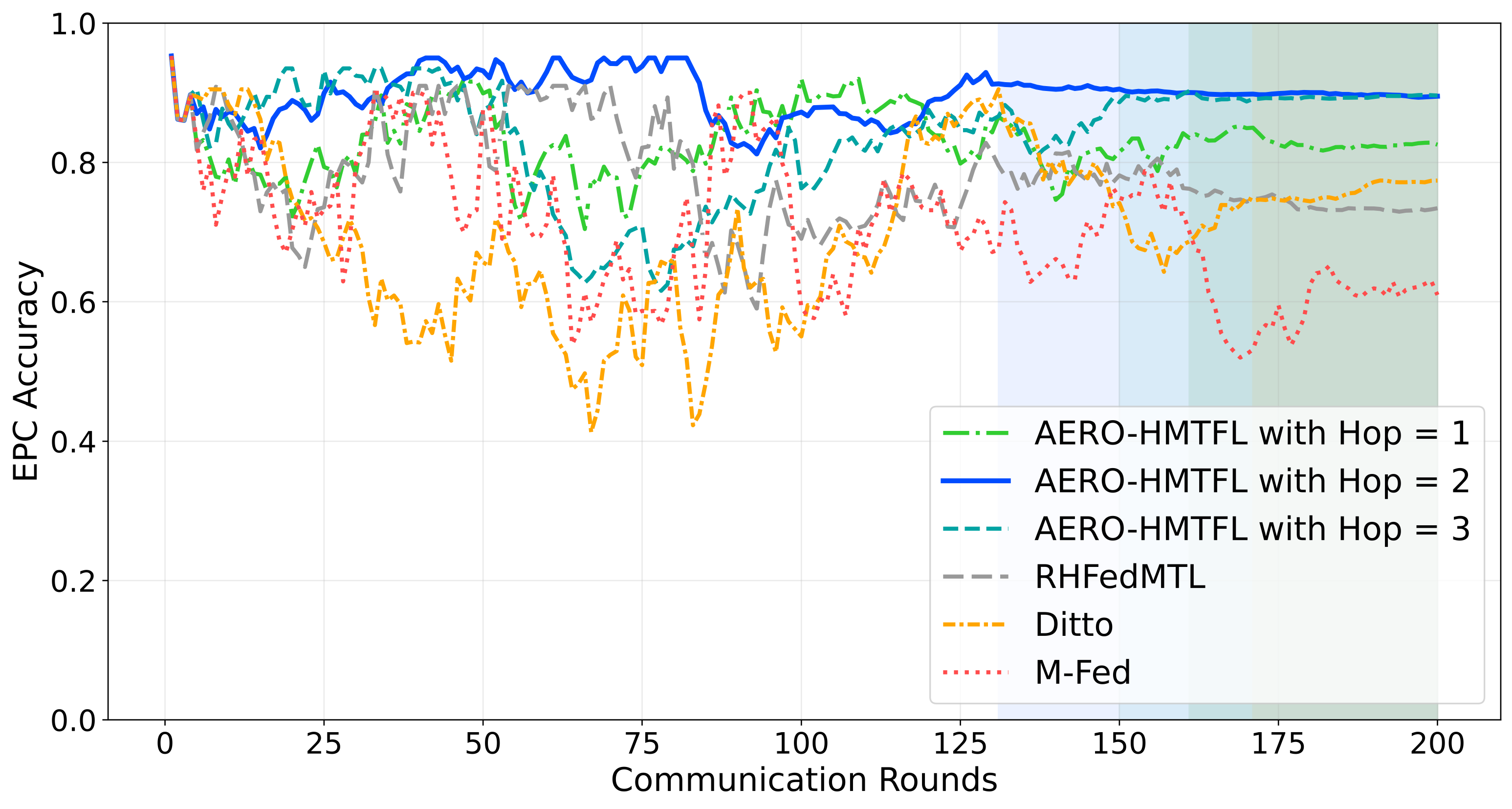}
         \subcaption*{(a)}
         \label{fig: Hops_MNIST}
     \end{subfigure}
     \vfill
     \begin{subfigure}{0.45\textwidth}
         \centering
         \includegraphics[width=\textwidth]{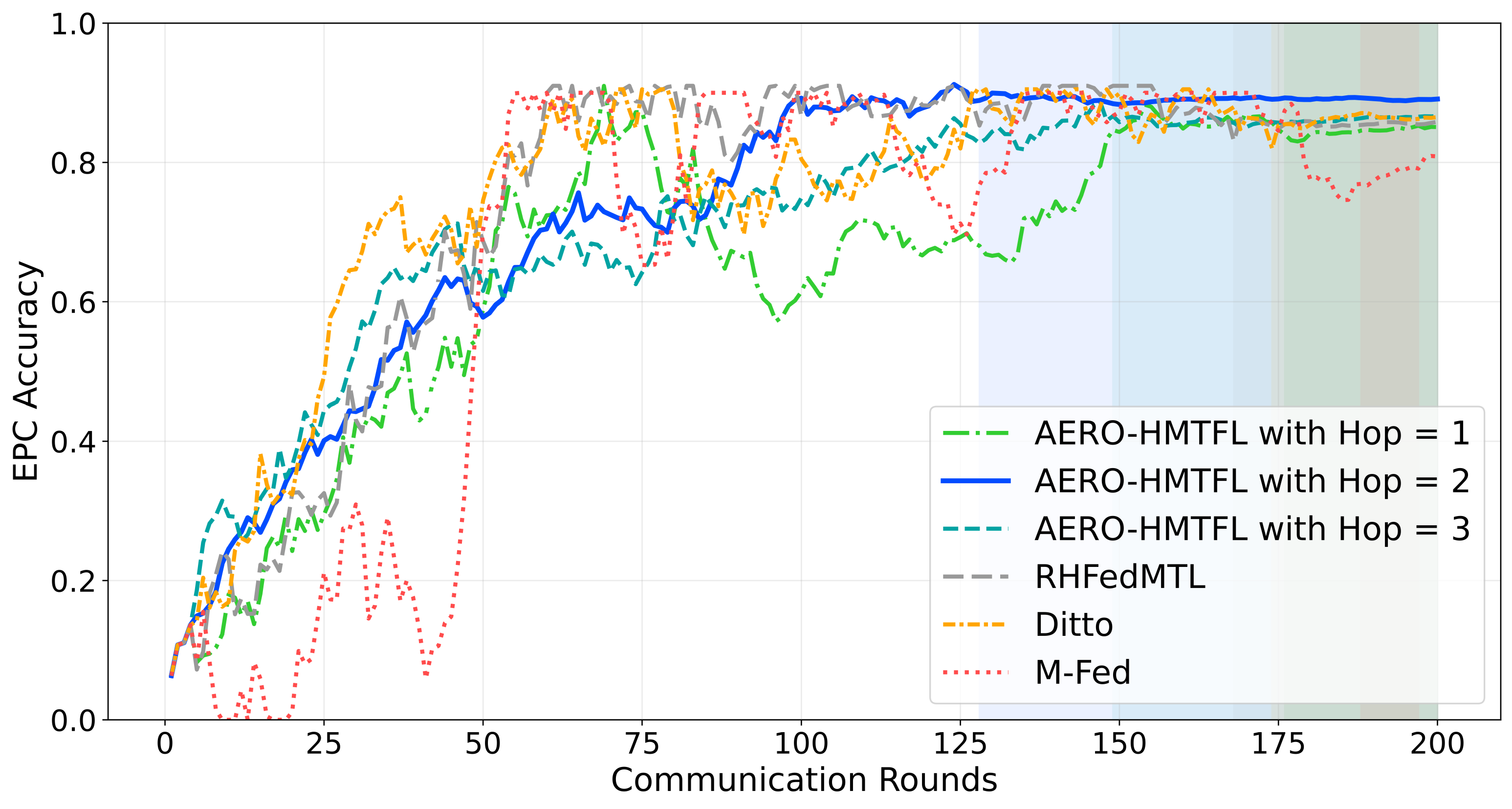}
         \subcaption*{(b)}
         \label{fig: Hops_GTSRB}
     \end{subfigure}
     \vfill
     \begin{subfigure}{0.45\textwidth}
         \centering
         \includegraphics[width=\textwidth]{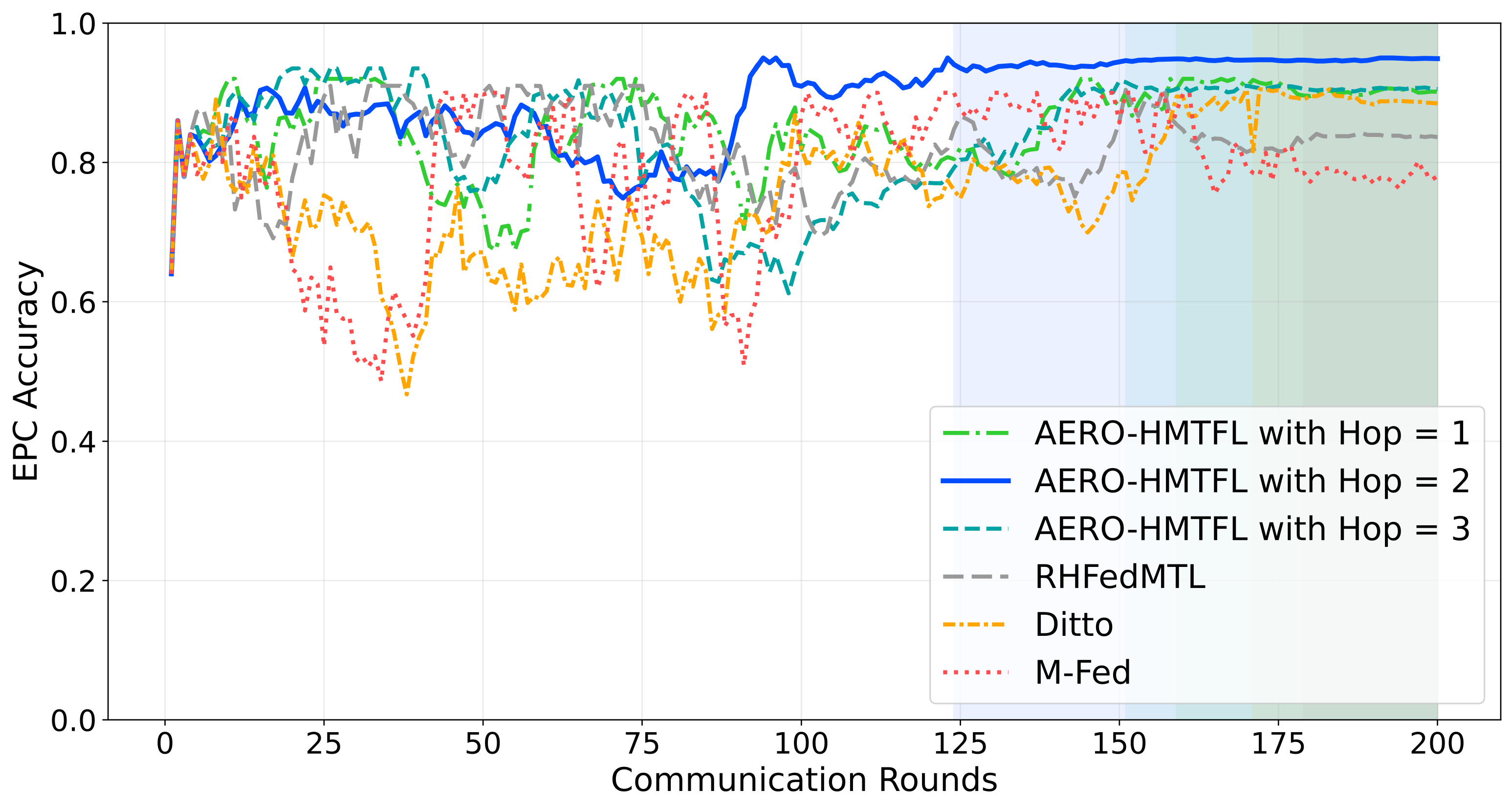}
         \subcaption*{(c)}
         \label{fig: Hops_CIFAR}
     \end{subfigure}
        \caption{Comparison of EPC accuracy across different tasks in the 20-vehicle scenario, with a 100 m transmission range for different numbers of hops. a) MNIST, b) GTSRB, c) CIFAR-10.}
        \label{fig: EPC Accuracy for hops}
\end{figure}

Figure 4 compares EPC accuracy across MNIST, GTSRB, and CIFAR-10 in the 20-vehicle scenario with a 100 m transmission range, while varying the maximum hop count for AERO-HMTFL (Hop = 1, 2, 3) and benchmarking against RHFedMTL, Ditto, and M-Fed. Increasing the hop limit expands the effective neighborhood and reduces short-range fragmentation, thereby typically improving information flow and stabilizing the aggregation process. However, overly large hop counts can also spread heterogeneous updates across a wider portion of the network, increasing gradient variance and inducing additional oscillations.

For MNIST, AERO-HMTFL with Hop = 2 achieves the highest and most stable accuracy plateau. Hop = 1 and Hop = 3 remain competitive but exhibit slightly larger fluctuations, which is consistent with the trade-off between connectivity and heterogeneity. Hop = 1 provides limited mixing for links under 100 m, while Hop = 3 broadens mixing enough to introduce stronger update inconsistencies. The baselines show deeper oscillations and lower sustained accuracy, indicating higher sensitivity to non-IID updates. In particular, RHFedMTL remains affected by residual feature mismatch; Ditto is more sensitive to heterogeneous and time-varying participation; and M-Fed exhibits greater instability due to the lack of mechanisms for hierarchical clustering and for suppressing unreliable updates.

For GTSRB, the benefit of multi-hop connectivity is more pronounced due to higher task heterogeneity. AERO-HMTFL achieves the cleanest stabilization, especially with Hop = 2, since multi-hop communication improves the representativeness of aggregation by connecting vehicles that would otherwise remain isolated at 100 m under single-hop communication. Hop = 1 shows slower, noisier improvement, consistent with a limited neighborhood size, whereas Hop = 3 can introduce additional variance by propagating mismatched updates across broader neighborhoods. The baselines remain more oscillatory, with M-Fed showing particularly unstable behavior, reflecting its vulnerability to inconsistent gradients under heterogeneous perception data, and Ditto exhibiting pronounced fluctuations due to uncontrolled mixing of diverse client updates.

On CIFAR-10, AERO-HMTFL again achieves the highest sustained EPC accuracy and the most stable late-round regime with Hop = 2. Hop = 1 tends to be constrained by reduced mixing, whereas Hop = 3 can amplify the propagation of heterogeneity, which is more costly for CIFAR-10 due to its higher gradient diversity. RHFedMTL, Ditto, and M-Fed remain below AERO-HMTFL in sustained accuracy and exhibit larger fluctuations, consistent with their weaker ability to control mobility and address update inconsistency under complex vision tasks.

Overall, Figure 4 shows that expanding connectivity via multi-hop communication can improve stability in short-range links, provided the hop count is carefully chosen. Hop = 2 consistently provides the best balance between improved neighborhood coverage and controlled heterogeneity propagation. This aligns with the design of AERO-HMTFL, where reliability-aware aggregation and hierarchical coordination benefit from richer connectivity, while excessive propagation can introduce additional variance.

\begin{table}[H]
\centering
\caption{Convergence time (in communication rounds) as a function of the number of vehicles and transmission ranges for different datasets and algorithms within the first 200 communication rounds, assuming single-hop connectivity.}
\label{tab:convergence_combined}
\begin{tabular}{|c|c|c|c|c|}
\hline
\textbf{Scenario} & \textbf{Algorithm} & \textbf{MNIST} & \textbf{GTSRB} & \textbf{CIFAR-10} \\ \hline
\multirow{4}{*}{\shortstack{\textbf{20 Vehicles}\\\textbf{Tx = 100 m}}}
                                      & \textbf{AERO-HMTFL} & \textbf{138} & \textbf{141} & \textbf{141} \\ \cline{2-5}
                                      & RHFedMTL            & 164          & 164          & 163          \\ \cline{2-5}
                                      & Ditto               & 174          & 174          & 174          \\ \cline{2-5}
                                      & M-Fed               & 194          & 184          & 184          \\ \hline
\multirow{4}{*}{\shortstack{\textbf{50 Vehicles}\\\textbf{Tx = 100 m}}}
                                      & \textbf{AERO-HMTFL} & \textbf{142} & \textbf{142} & \textbf{154} \\ \cline{2-5}
                                      & RHFedMTL            & 185          & 185          & 180          \\ \cline{2-5}
                                      & Ditto               & 189          & 195          & 188          \\ \cline{2-5}
                                      & M-Fed               & N/A          & N/A          & 195          \\ \hline
\multirow{4}{*}{\shortstack{\textbf{20 Vehicles}\\\textbf{Tx = 500 m}}}
                                      & \textbf{AERO-HMTFL} & \textbf{174} & \textbf{179} & \textbf{184} \\ \cline{2-5}
                                      & RHFedMTL            & 162          & 171          & 167          \\ \cline{2-5}
                                      & Ditto               & 178          & 174          & 181          \\ \cline{2-5}
                                      & M-Fed               & N/A          & 189          & N/A          \\ \hline
\end{tabular}
\end{table}

Table I summarizes the convergence time (in communication rounds) as a function of the number of vehicles and transmission ranges for single-hop connectivity on MNIST, GTSRB, and CIFAR-10. Across all datasets and both densities, AERO-HMTFL consistently converges faster than RHFedMTL and Ditto, and it is also more reliable than M-Fed, which fails to converge in some dense settings. With 20 vehicles, AERO-HMTFL converges approximately 13–16\% fewer rounds than RHFedMTL and 19–21\% fewer rounds than Ditto, depending on the dataset. Compared to M-Fed, AERO-HMTFL reduces the convergence time by roughly 23–29\%, indicating substantially more stable learning under short-range, mobility-affected participation. With 50 vehicles, the performance gap widens. AERO-HMTFL converges in about 14–23\% fewer rounds than RHFedMTL and 18–27\% fewer rounds than Ditto. Notably, M-Fed does not converge within the first 200 rounds for MNIST and GTSRB, highlighting its sensitivity to increased heterogeneity and mobility when learning is performed without hierarchical coordination and reliability-aware suppression of unstable updates. Overall, Table I confirms that AERO-HMTFL maintains superior scalability as the number of vehicles increases from 20 to 50 across different transmission ranges; its convergence time increases only marginally, whereas the baselines degrade more noticeably, and M-Fed may fail to converge.

Although AERO-HMTFL does not achieve the minimum convergence time at Tx = 500 m, its behavior is consistent with the accuracy trajectories observed in Fig. 3. In the 500 m regime, the communication graph becomes significantly denser and less fragmented, so the main bottleneck shifts from connectivity limitations to heterogeneity management. As a result, RHFedMTL can satisfy the convergence criterion earlier by stabilizing at a plateau under dense participation, thanks to base-station aggregation. However, Fig. 3 indicates that this earlier stabilization does not necessarily correspond to the best sustained accuracy or the most stable late-round dynamics. Dense single-hop connectivity diminishes the relative advantage of communication-aware coordination in terms of speed, allowing RHFedMTL to stabilize faster, whereas AERO-HMTFL prioritizes robustness and sustained predictive performance. This explains why AERO-HMTFL remains near the best convergence times while delivering improvements in stability and accuracy.


\begin{table}[H]
\centering
\caption{Convergence time (in communication rounds) for different numbers of hops and different datasets within the first 200 communication rounds, assuming 20 vehicles with 100 m transmission range.}
\begin{tabular}{|c|c|c|c|c|}
\hline
\textbf{Scenario} & \textbf{Algorithm} & \textbf{MNIST} & \textbf{GTSRB} & \textbf{CIFAR-10} \\ \hline
\multirow{6}{*}{\textbf{\begin{tabular}[c]{@{}c@{}}20 Vehicles\\ Tx = 100 m\end{tabular}}} & \textbf{\begin{tabular}[c]{@{}c@{}}AERO-HMTFL \\ with Hop = 1\end{tabular}} & \textbf{161}   & \textbf{176}   & \textbf{159}      \\ \cline{2-5} 
                                                                                           & \textbf{\begin{tabular}[c]{@{}c@{}}AERO-HMTFL\\ with Hop = 2\end{tabular}}  & \textbf{131}   & \textbf{128}   & \textbf{124}      \\ \cline{2-5} 
                                                                                           & \textbf{\begin{tabular}[c]{@{}c@{}}AERO-HMTFL\\ with Hop = 3\end{tabular}}  & \textbf{150}   & \textbf{149}   & \textbf{151}      \\ \cline{2-5} 
                                                                                           & RHFedMTL                                                                    & 161            & 168            & 179               \\ \cline{2-5} 
                                                                                           & Ditto                                                                       & 171            & 174            & 171               \\ \cline{2-5} 
                                                                                           & M-Fed                                                                       & N/A            & 188            & N/A               \\ \hline
\end{tabular}
\end{table}

Table~II compares convergence time within the first 200 rounds for different hop limits under Tx = 100 with 20 vehicles across MNIST, GTSRB, and CIFAR-10. The best configuration is consistently AERO-HMTFL with Hop = 2, which converges fastest on all datasets. Relative to Hop = 1, Hop = 2 reduces convergence time by 18.6\% (MNIST), 27.3\% (GTSRB), and 22.0\% (CIFAR-10). It also outperforms Hop = 3 by 12.7\%, 14.1\%, and 17.9\%, respectively. Compared with RHFedMTL and Ditto, Hop = 2 improves convergence by roughly 19--31\% and 23--28\%, respectively, while M-Fed fails to converge for MNIST and CIFAR-10 within 200 rounds.

\begin{figure}[ht]
\setlength\belowcaptionskip{0pt}
\centering
\includegraphics[width=9 cm]{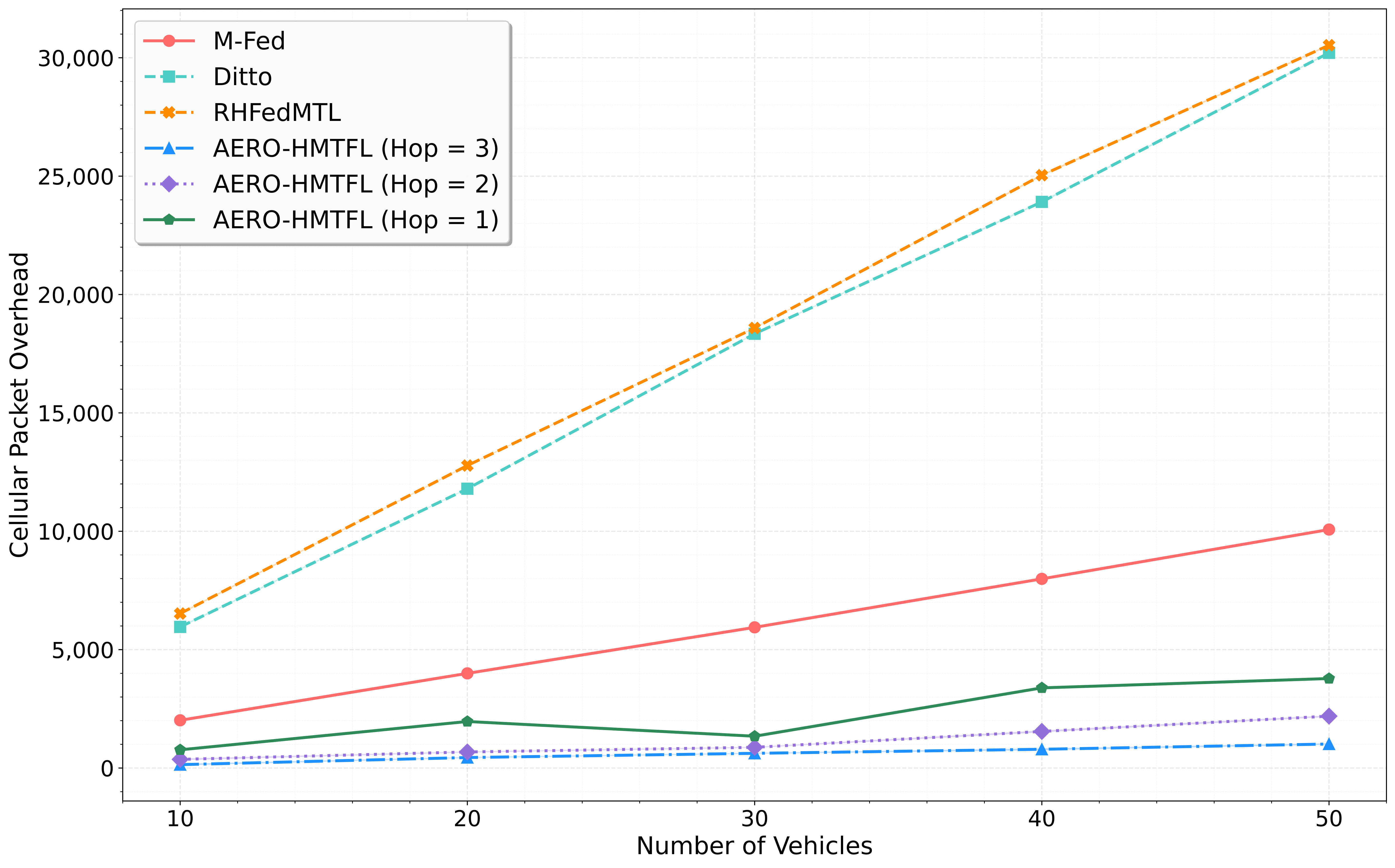}
\caption{Cellular packet overhead as a function of the number of vehicles}
\end{figure}

Figure~5 shows cellular packet overhead measured as the total
number of packets sent to the EPC as a function of the number of vehicles at a transmission range of 100~m. Two key observations arise. First, Ditto and RHFedMTL incur the highest overhead and scale almost linearly with the network size. Both methods rely on frequent client-level transmissions without effective mobility-aware communication localization. Second, M-Fed exhibits lower overhead than Ditto and RHFedMTL, as expected, since autoencoder-based latent representations can reduce the update footprint. However, M-Fed still requires substantially more packets than AERO-HMTFL as the system scales because it lacks clustering and hierarchical coordination to localize aggregation. Across all vehicle densities, AERO-HMTFL achieves the lowest packet count and the best scalability. At 50 vehicles, AERO-HMTFL reduces packet transmissions by approximately 87-97\% relative to Ditto/RHFedMTL, depending on the hop setting (about $3.8\times 10^{3}$, $2.2\times 10^{3}$, and $1.0\times 10^{3}$ packets for Hop~=~1, 2, and 3, respectively). This reduction directly follows from addressing the baseline limitations: hierarchical, task-aware coordination localizes aggregation and limits redundant network-wide exchanges.

\section{Conclusion}

This paper proposed AERO-HMTFL, an AutoEncoder-based Reliability-Optimized Hierarchical Multi-Task Federated Learning framework for dynamic multi-hop clustered VANETs. Unlike conventional vehicular HFL methods that assume a single common task, AERO-HMTFL enables vehicles with heterogeneous learning objectives to collaborate through a shared autoencoder-based representation while retaining task-specific heads locally. The framework integrates task-aware multi-hop clustering, split-model representation sharing, reliability-aware aggregation, and hierarchical coordination to support scalable and privacy-preserving multi-task learning under mobility, intermittent connectivity, and time-varying participation. By incorporating task affinity into the clustering criterion, the proposed method promotes both mobility stability and semantic alignment among collaborating vehicles. Moreover, exchanging only shared autoencoder parameters, task identifiers, and aggregate validation scores preserves raw-data and task-model privacy, while reliability-based weighting limits the influence of unstable or low-quality updates. Simulation results demonstrate that AERO-HMTFL outperforms RHFedMTL, Ditto, and M-Fed in convergence speed, learning stability, accuracy, and communication efficiency. Across the evaluated scenarios, it reduces convergence time by up to approximately 29\% and communication overhead by about 87-97\% in dense settings. The multi-hop analysis further indicates that controlled connectivity is essential, with the two-hop configuration providing the most effective balance between information exchange and heterogeneity propagation. Overall, AERO-HMTFL provides an integrated learning architecture that jointly addresses task heterogeneity, mobility, communication constraints, and update reliability in VANETs. Future work will consider more complex perception tasks, multimodal sensor data, adaptive task scheduling, and security-aware aggregation under adversarial conditions.

\section*{Disclosure of AI-Assisted Writing}

The authors used Open AI ChatGPT to assist with grammar correction, English-language editing, and improving the clarity and readability of the manuscript. All AI-assisted content was reviewed and approved by the authors, who take full responsibility for the final manuscript.

\bibliographystyle{ieeetr}
\bibliography{bare_jrnl_clean}

\end{document}